\relax
\documentclass[letterpaper]{article} 
\usepackage{aaai19}  
\usepackage{times}  
\usepackage{helvet}  
\usepackage{courier}  
\usepackage{url}  
\usepackage{graphicx}  
\frenchspacing  
\setlength{\pdfpagewidth}{8.5in}  
\setlength{\pdfpageheight}{11in}  

\usepackage{amsmath,amscd,amsbsy,amssymb,latexsym,url,bm,amsthm,mathtools}
\usepackage{stfloats} 
\usepackage[ruled,lined]{algorithm2e}
\newtheorem{theorem}{Theorem}
\newtheorem{definition}{Definition}
\newtheorem{proposition}{Proposition}
\newtheorem{corollary}{Corollary}[theorem]
\newtheorem{lemma}{Lemma}

  \pdfinfo{
	/Title (2019 Formatting Instructions for Authors Using LaTeX)
	/Author (AAAI Press Staff)}
\setcounter{secnumdepth}{0}

 \begin{document}
%
\title{Cooperation Enforcement and Collusion Resistance \\in Repeated Public Goods Games}
\author{Kai Li\\
Shanghai Jiao Tong University\\
kai.li@sjtu.edu.cn\\
\And
Dong Hao\thanks{Corresponding author.}\\
University of Electronic Science and Technology of China\\
haodong@uestc.edu.cn
}
\maketitle

\begin{abstract}
Enforcing cooperation among substantial agents is one of the main objectives for multi-agent systems. However, due to the existence of inherent social dilemmas in many scenarios, the free-rider problem may arise during agents' long-run interactions and things become even severer when self-interested agents work in collusion with each other to get extra benefits. It is commonly accepted that in such social dilemmas, there exists no simple strategy for an agent whereby she can simultaneously manipulate on the utility of each of her opponents and further promote mutual cooperation among all agents. Here, we show that such strategies do exist. Under the conventional repeated public goods game, we novelly identify them and find that, when confronted with such strategies, a single opponent can maximize his utility only via global cooperation and any colluding alliance cannot get the upper hand. Since a full cooperation is individually optimal for any single opponent, a stable cooperation among all players can be achieved. Moreover, we experimentally show that these strategies can still promote cooperation even when the opponents are both self-learning and collusive.
\end{abstract}

\section{Introduction}

The emergence of substantial real-world multi-agent systems becomes an irreversible tendency in artificial intelligence.
Enforcing cooperation on the agents to achieve good outcomes is a long-standing challenge in multi-agent systems, especially when the agents are involved in a long-run interaction or when there are a large number of agents \cite{santos2018social,wooldridge2009introduction,kraus1997negotiation}.
One of the underlying reasons for such a challenge is that there is a social dilemma in many multi-agent systems \cite{panait2005cooperative,turner1993tragedy}.
 Although the agents can bilaterally share costs to give benefits to others and achieve a positive social equilibrium whereby each agent's utility is much improved, without a proper incentive or strategy design, the free-rider problem will occur and self-interested agents will settle to competitive equilibria that are most likely sub-optimal.
This well-known phenomenon is called the tragedy of the commons \cite{olson2009logic,hardin1968tragedy}.
It becomes even severer when some agents form a colluding group and tactically adjust the group strategy to get a higher average utility than that of the cooperative equilibrium \cite{telser2017competition,hilbe2014cooperation}.
Decades of research from different disciplines has sought to conquer the multi-agent social dilemma.
It remains equally critical today.

One of the most representative models for social dilemmas is the repeated Public Goods Games (PGG), which is a dynamic extension of the abstract public goods games \cite{ledyard1994public}.
A repeated PGG could consist of a series of stage games (either finitely or infinitely many).
In each stage $t$, the same group of $n$ agents privately decide how many of their tokens $e$ to put into a public pot.
At the end of each stage, the tokens in the pot are multiplied by a factor $r$, representing the `public good' and is equally distributed to all the agents. With $r$ smaller than the group size $n$, non-contributing free-riders gain more than contributors.
In literature, experimental economic studies suggest a common interval for $r$ is greater than $0.3n$ and less than $0.75n$ \cite{isaac1984divergent}.
Although this multi-player game system can yield a maximum total utility when all agents decide to fully contribute their tokens, the Nash equilibrium of a single stage PGG is no contribution to the public good for every agent.
In such a standard model of the repeated social dilemma, according to the Folk Theorem, the long-run interaction allows for the emergency of many equilibria within which the social optimum can be directed.
But the problem is, without proper strategy design for the repeated PGG, the individually rational agents will lead to an inferior outcome for all or some agents, than the decisions of `socially rational' agents.
The problem is how to select out good equilibria, which can be considered loosely equivalent to a social norm \cite{santos2017social,young2015evolution,epstein2001learning}.




In this work, we propose a new theory to solve the social dilemma under the framework of the repeated PGG.
We first set up the model of the $n$-player repeated PGG.
Via analysis of the joint Markov Decision Process (MDP) of this game, we introduce a compelling relation between a single player's strategy and the resulting state of the repeated game.
We then accordingly propose a theoretical method to offer one (or some) of the players a delicately designed Markov strategy, which we call cooperation enforcing.
Confronted with such a strategy, a single opponent can maximize his utility only via global cooperation by all players and any colluding alliance cannot get extra benefits.
Since a full cooperation is individually optimal for any single opponent, the social optimum, which means the stable cooperation among all players, can be achieved.
Moreover, one can find that the classic strategies including Grim Trigger and Win-Stay Loss-Shift all can be viewed as special cases of our proposed cooperation enforcing strategies.
Because any individual or group deviation from the social optimum will cause utility decrease, the proposed strategy is also collusion-resistant.
Although the proposed strategy is theoretically proved to have the above nice properties, we additionally show some numerical evidences.
In the simulation of the repeated PGG, the proposed strategy is run against an opponent group containing rational learning players.
The learning player is modeled to behave like human.
Such a simulation can help us understand the performance of the proposed strategy in the real world.

The significance of this newly proposed cooperation-enforcing strategy is multi-fold.
First, it identifies the individual player's power to unilaterally influence the payoff of each player in a group, which has not been discovered previously.
Based on this unilateral influence, the $n$-player mutual cooperation can be established simply by one player launching such a strategy, while any additional condition for cooperation such as initial condition, extra punishment and reward, social links, is not required. This makes the model simple but profound.
Second, it also allows us to understand how difficult it is for players to reach the social optimum and to sustain such an optimum.
It could be further utilized to establish a good social norm under different environments.
Third, it reduces the amount of optimization computation required by the players, since they do not have to search in their entire strategy space if they are aware of the fact that at least one opponent is using the cooperation enforcing strategy.
Last but not least, since under such strategies, all-player cooperation is the social optimum, any complicated multilateral deviation from it will cause damage to the average utility.
Thus these strategies naturally possess the property of resisting the collusion, which has been a very tough problem in multi-player games.
They provide us with a new possibility for solving collusion problems in more advanced game settings.


\section{Repeated Multiplayer Games}



Consider an infinitely repeated public goods game (PGG) among $n$ players with perfect monitoring.
In each stage, every player $i\in  \{1, \cdots, n\}$ decides whether to cooperate ($c$) by contributing her own endowment $e>0$ to the public pool, or to defect ($d$) by contributing nothing.
Without loss of generality, let $e=1$.
We denote player $i$'s action by $a_i \in A= \{c, d\}$.
After every player executes a certain action, there are totally $2^n$ possible outcomes for each stage.
Let ${\bm o}\in A^n$ denote the outcome vector.
To analyze the game from a group's perspective, we define a general representation of outcomes as follows.
\begin{definition}
	\label{def: Stage_Outcome_Representation}
A stage outcome can be represented as a tuple ${\bm o}=({\bm a}_{X}, {\bm a}_{Y})$, where ${\bm a}_{X}$ and ${\bm a}_{Y}$ are the action profiles from player groups $X$ and $Y$, respectively. $X\cap Y=\emptyset$ and $X \cup Y=\{1, \cdots, n\}$. Denote the number of cooperators and defectors in a group by functions $\#_c(\cdot)$ and $\#_d(\cdot)$, respectively.
\end{definition}
 For example, if $n=4$ and an outcome is ${\bm o}=ccdc$, then from a single player $i$'s perspective, ${\bm o}=(a_i, {\bm a}_{-i})$, where $a_i$ is the action of $i$ and ${\bm a}_{-i}$ is the action profile of other players. If $i=1$, then ${\bm a}_{-1}=cdc$ and $\#_c({\bm a}_{-1})=2$. From a joint perspective of two players $i$ and $j$, ${\bm o}=(a_i a_j, {\bm a}_{-ij})$, where ${\bm a}_{-ij}$ is the action profile excluding $i$ and $j$. If $i=1$ and $j=2$, then ${\bm a}_{-12}=dc$ and $\#_c({\bm a}_{-12})=1$. Besides, we define the $n$-player mutual cooperation and mutual defection as ${\bm o}=c^n$ and ${\bm o}=d^n$.

If the above game is infinitely repeated, then the strategy for each player is a mapping from any history to the current action $a$, which could be very complex.
We concentrate on the strategies depending only on what happened in the previous stage. These are called memory-one strategies and can be described as a vector of probabilities conditioning on the outcomes of the previous stage.
Intuitively, a memory-one strategy for the $n$-player PGG could be $2^n$-dimensional.
However, if the agents are symmetric players, then the only issue that matters for a player is how many opponents cooperate (i.e., $\#_c({\bm a}_{-i})$) and defect (i.e., $\#_d({\bm a}_{-i})$). Therefore, from a focal player $i$'s perspective, any stage outcome $\bm{o}\in A^n$ can be represented by a tuple $(a_i,k) \in A\times K$, where $a_i \in A= \{c, d\}$ is $i$'s action and $k =\#_c({\bm a}_{-i})\in K= \{0, 1, \cdots, n-1\}$ is the number of cooperating opponents. Then the dimension of the strategy space is reduced to $|A\times K|=2n$. Define the memory-one strategies for the repeated PGG as follows.

\begin{definition}
\label{def: Multi-Player Memory-One Strategies}
A memory-one strategy for player $i$ can be defined as a vector $\bm{p}=\left[p_{a_i,k}\right]$, where each $p_{a_i,k}= \mathbb{P} \{ c| {(a_i,k)} \}$ is the conditional probability for cooperation given the previous stage outcome $(a_i, k)$. More explicitly,
\begin{equation}
\begin{aligned}
\bm{p} =  \left (  p_{c, 0}, \cdots,  p_{c, k}, \cdots, p_{c, n-1}, \right. \\
\quad \left. p_{d, 0}, \cdots , p_{d, k} , \cdots, p_{d, n-1}  \right ).
\end{aligned}
\end{equation}

\end{definition}

Such a strategy vector contains $2n$ components, while the full outcome space is $2^n$-dimensional. This is because the strategy uses a same cooperating probability $p_{a_i,k}$ for some outcomes $(a_i,{\bm a}_{-i})$ that have the same number of cooperating opponents $k=\#_c({\bm a}_{-i})$.
Most of the well-known classic strategies can be written in the above form.
For instance, unconditional cooperation (\emph{ALLC}) can be given by $(1,1,\cdots, 1)$; unconditional defection (\emph{ALLD}) is written as $(0, 0, \cdots, 0)$; the \emph{Repeat} strategy,  which repeats the previous action, can be represented as $\bm{p}^{R}$ with $p_{c, k}=1$ and $p_{d, k}=0$ for all $k$.

At the end of each stage game, the stage outcome is perfectly observed by all players and the endowments in the current public pool are multiplied by a factor $r$, representing the `public good' and is equally distributed to all the agents.
Then the same stage game repeats. We call $r\over n$ the game's marginal per-capita rate of return (MPCR) \cite{isaac1984divergent}.
For every unit a player spends, the MPCR measures how much
the she gets back.
With $r$ smaller than the group size $n$, non-contributing free-riders gain more than contributors.

For repeated games, it is common to define payoffs as the average payoff that players receive over all stages.
To calculate these expected payoffs, we first define a focal player's exact stage game payoff values and payoff vectors.
\begin{definition}
	\label{def: Multi-Player Stage-Game Payoffs}
 Confronting with $k \in K$ cooperating opponents, player $i$'s stage payoff is $R_{a_i, k} = r(k+1)/n -1$ if $a_i=c$ and is $R_{a_i, k} = rk/n$ if $a_i=d$, where $r \in (1, n)$. Then the payoff vector for $i$ is $\bm{\pi}_i = (R_{a_i,k})_{({a_i,k})\in A\times K}$.
\end{definition}

Based on the payoff vectors, now one can derive the average payoffs. Denote by $\bm{v}(t)= \left (v_{\bm o}(t) \right)_{{\bm o} \in A^n}$ the probability distribution over the outcome ${\bm o}(t) \in A^n$ at the $t$-th stage game, where $v_{\bm o}(t)= \mathbb{P}\{{\bm o}(t)={\bm o}\}$.
Let $\bm{v}$ be a limit point of the sequence $\{\frac{1}{t} \sum_{m=1}^t \bm{v}(m) \}$ and we refer to it as \emph{limit distribution}.
Note that when the limit distribution is unrelated with the outcome of the initial stage game, then it is a unique \emph{stationary distribution} of the Markov chain formed by all players' memory-one strategies.
Player $i$'s expected payoff is an inner product of the limit distribution and her payoff vector, which is written as $\pi_i = \bm{\pi}_i \cdot \bm{v}$.

Both $\pi$ and $\bm v$ are associated with the Markov chain, which consists of all players' strategies.
But recent researches significantly reveal that in repeated games, there is an underlying relation between a single player's strategy $\bm p$ and the limit distribution of states $\bm v$ \cite{akin2012stable,hilbe2014cooperation}.
We write this result in a muti-player form as follows.
\begin{lemma}
	\label{lemma: Akin's Lemma}
Let $\bm{p}$ denote the focal player's memory-one strategy and let $\bm{p}^{R}$ denote the Repeat strategy, then for any limit distribution $\bm{v}$ (irrespective of the outcome of the initial stage), we have
	\begin{equation}\label{eq:Akin's Lemma}
	(\bm{p} - \bm{p}^{R}) \cdot \bm{v} = 0.
	\end{equation}
\end{lemma}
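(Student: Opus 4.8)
The plan is to interpret the inner product $(\bm{p}-\bm{p}^{R})\cdot\bm{v}$ through the natural projection sending each full outcome $\bm{o}\in A^n$ to the focal player's reduced view $(a_i,k)$, where $a_i$ is $i$'s action in $\bm{o}$ and $k=\#_c(\bm a_{-i})$. Under this lifting, the component of $\bm{p}$ aligned with outcome $\bm{o}$ is the cooperation probability $p_{a_i,k}$ that $i$'s strategy prescribes after observing $\bm{o}$, so that $\bm{p}\cdot\bm{v}(t)=\sum_{\bm{o}}v_{\bm{o}}(t)\,p_{a_i,k}$ is exactly the probability $\mathbb{P}\{a_i(t+1)=c\}$ that $i$ cooperates in stage $t+1$. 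Writing $q(t)=\mathbb{P}\{a_i(t)=c\}$, this reads $\bm{p}\cdot\bm{v}(t)=q(t+1)$.

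The first key observation is that the Repeat strategy reads off the \emph{current} cooperation probability rather than the next one. Since $p^{R}_{c,k}=1$ and $p^{R}_{d,k}=0$ for every $k$, the lifted vector $\bm{p}^{R}$ carries a $1$ on every outcome in which $i$ cooperates and a $0$ on every outcome in which $i$ defects; hence $\bm{p}^{R}\cdot\bm{v}(t)=\sum_{\bm{o}:\,a_i=c}v_{\bm{o}}(t)=q(t)$. Subtracting the two identities yields the telescoping relation $(\bm{p}-\bm{p}^{R})\cdot\bm{v}(t)=q(t+1)-q(t)$, which is the heart of the argument.

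From here I would average over the first $T$ stages and let $T\to\infty$. The Cesàro average collapses by telescoping:
\begin{equation}
(\bm{p}-\bm{p}^{R})\cdot\Big(\tfrac{1}{T}\sum_{m=1}^{T}\bm{v}(m)\Big)=\tfrac{1}{T}\sum_{m=1}^{T}\big(q(m+1)-q(m)\big)=\tfrac{q(T+1)-q(1)}{T}.
\end{equation}
Because each $q(m)\in[0,1]$, the numerator is bounded by $1$ in absolute value, so the right-hand side tends to $0$. Passing to the subsequence along which $\tfrac{1}{T}\sum_{m=1}^{T}\bm{v}(m)$ converges to the limit distribution $\bm{v}$, and invoking continuity of the finite-dimensional inner product, gives $(\bm{p}-\bm{p}^{R})\cdot\bm{v}=0$.

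I expect the main obstacle to be conceptual rather than computational: keeping the bookkeeping straight between the $2^n$-dimensional outcome space on which $\bm{v}$ lives and the $2n$-dimensional reduced space on which $\bm{p}$ is defined, and verifying that the lifting which assigns $p_{a_i,k}$ to every compatible outcome $\bm{o}$ is consistent. Once that projection is pinned down, the dichotomy that $\bm{p}$ sees the next stage while $\bm{p}^{R}$ sees the current stage, together with the telescoping sum, does the rest; the boundedness of $q$ is precisely what guarantees that every limit point annihilates $\bm{p}-\bm{p}^{R}$ irrespective of the initial outcome.
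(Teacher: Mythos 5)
Your proof is correct and is precisely the standard telescoping argument behind Akin's Lemma: lifting $\bm{p}$ to the $2^n$-dimensional outcome space, observing that $\bm{p}\cdot\bm{v}(t)=q(t+1)$ while $\bm{p}^{R}\cdot\bm{v}(t)=q(t)$, and letting the Ces\`aro average collapse to $\frac{q(T+1)-q(1)}{T}\to 0$ along the subsequence defining $\bm{v}$. The paper itself gives no proof, citing Akin (2012) and Hilbe et al.\ (2014), and your argument coincides with the one in those references, so there is nothing to correct.
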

This general relation requires no additional information of the game. We call it Akin's Lemma, and it will be an important foundation for our following analysis.



\section{Cooperation Enforcing Strategy}


Traditionally, in repeated multiplayer games, one player cannot unilaterally manipulate the evolution of the game and it is even more difficult for a single player to enforce stable cooperation among all players.
However, based on the model in the previous section, now we will novelly analyze that one player can delicately design a Markov strategy whereby any single opponent can maximize his expected payoff only via global cooperation.
We call such strategies cooperation enforcing and formally define them as follows.


\begin{definition}[Cooperation Enforcing Strategy]
	\label{def: stable_cooperation strategies}
A memory-one strategy $\bm{p}$ for player $i$ is called cooperation enforcing if and only if the following conditions hold:\\
(1) Player $i$ cooperates in the first stage.\\
(2) $p_{c, n-1} =1$, i.e., player $i$ continues to cooperate if all the opponents cooperated in the previous stage.\\
(3) Either for all players $l \in \{1,2,\cdots, n\}, \ \pi_l = R_{c, n-1}$, or for any opponent $j \in \{1,2,\cdots, n\}\setminus\{i\}, \pi_j < R_{c, n-1}$.
\end{definition}
According to Definition \ref{def: stable_cooperation strategies}, if a focal player adopts a cooperation enforcing strategy, the payoffs of her opponents only have two possible forms, either less than or equal to the payoff of mutual cooperation.
As a result, the focal player not only controls the upper bound of each opponent's payoff, but also promote mutual cooperation.
Based on this definition, we can immediately obtain the following propositions.

\begin{proposition}
\label{proposition: strategy_p_2}
A memory-one strategy $\bm{p}$ is cooperation enforcing only if $p_{c, n-2} <1$.
\end{proposition}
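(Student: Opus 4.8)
The plan is to prove the contrapositive: assuming $p_{c,n-2}=1$, I will exhibit a profile of opponent strategies under which neither disjunct of condition~(3) of Definition~\ref{def: stable_cooperation strategies} can hold, so that $\bm p$ fails to be cooperation enforcing. Since $p_{c,n-2}$ is a probability and hence at most $1$, the hypothesis $p_{c,n-2}\ge 1$ is equivalent to $p_{c,n-2}=1$, so this contrapositive is exactly the negation of the claim.

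First I would fix one opponent $j$ and let $j$ play unconditional defection (\emph{ALLD}) while every other opponent plays unconditional cooperation (\emph{ALLC}). I then track the trajectory induced on the focal player $i$. By condition~(1), $i$ cooperates in the first stage; there the outcome seen by $i$ is $c$ together with exactly $k=n-2$ cooperating opponents (all opponents but $j$). By an immediate induction, whenever $i$ has just cooperated against $n-2$ cooperators the relevant transition probability is $p_{c,n-2}=1$, so $i$ cooperates again, and the opponents' fixed memoryless actions reproduce the same stage outcome. Hence the play locks, from the first stage onward, into the single outcome in which $j$ defects and all other $n-1$ players cooperate, and the limit distribution $\bm v$ is the point mass on this outcome.

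With the trajectory pinned down, the payoffs follow directly from Definition~\ref{def: Multi-Player Stage-Game Payoffs}. Player $j$ defects while all $n-1$ others cooperate, so $\pi_j=R_{d,n-1}=r(n-1)/n$, whereas the mutual-cooperation payoff is $R_{c,n-1}=r-1$. The comparison reduces to the elementary inequality $r(n-1)/n>r-1$, i.e.\ $-r/n>-1$, which holds precisely because $r<n$; thus $\pi_j>R_{c,n-1}$. This simultaneously rules out both disjuncts of condition~(3): it is false that every opponent earns strictly less than $R_{c,n-1}$ (since $j$ earns strictly more), and it is false that all players earn exactly $R_{c,n-1}$ (indeed $i$ earns $R_{c,n-2}=r(n-1)/n-1<R_{c,n-1}$ while $j$ earns more). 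Therefore $\bm p$ is not cooperation enforcing, which completes the contrapositive.

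I expect the only delicate step to be the self-sustaining-cooperation argument: one must notice that the single free-rider $j$ keeps $i$ permanently at the state ``$c$ against $n-2$ cooperators,'' so that the lone component $p_{c,n-2}$ fully determines $i$'s subsequent behavior and $j$ is never punished for defecting. Once this trap is identified, the remainder is the routine payoff inequality $r<n$, and no appeal to Akin's Lemma (Lemma~\ref{lemma: Akin's Lemma}) is needed.
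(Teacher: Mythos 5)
Your proof is correct and follows essentially the same route as the paper's: the identical counterexample profile (one opponent $j$ playing \emph{ALLD}, all others \emph{ALLC}), the observation that $p_{c,n-2}=1$ locks play into the stationary outcome where $j$ free-rides, and the conclusion $\pi_j=R_{d,n-1}>R_{c,n-1}$ contradicting condition~(3) of Definition~\ref{def: stable_cooperation strategies}. You merely make explicit two steps the paper leaves implicit, namely the induction showing the outcome is absorbing from the first stage onward (via condition~(1)) and the verification that $R_{d,n-1}>R_{c,n-1}$ reduces to $r<n$.
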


\begin{proof}
Suppose that player $i$ uses a memory-one cooperation enforcing strategy $\bm{p}$ with $p_{c, n-2}=1$ while one of her opponents $j$ adopts \emph{ALLD} and all of the remaining players apply \emph{ALLC}.
In this case, the state is stationary in which $j$ defects and all his opponents cooperate.
As a result, the expected payoff of $j$ will be $R_{d, n-1} > R_{c, n-1}$, which contradicts the condition (3) in Definition \ref{def: stable_cooperation strategies}.
\end{proof}

\begin{proposition}
\label{proposition: r}
In a repeated public goods game, cooperation enforcing strategies exist only if $\frac{r}{n} > \frac{1}{2}$.
\end{proposition}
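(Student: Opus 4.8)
The plan is to prove the contrapositive by a concrete opponent construction, in the same spirit as the proof of Proposition~\ref{proposition: strategy_p_2}. That is, I would assume $r/n \le 1/2$ together with the existence of a strategy $\bm{p}$ satisfying Definition~\ref{def: stable_cooperation strategies}, and then exhibit a single opponent profile against which condition~(3) cannot hold, yielding a contradiction. Note first that the interesting regime is $n \ge 3$: for $n=2$ the standing assumption $r \in (1,n)$ already forces $r/n = r/2 > 1/2$, so the ``only if'' claim is vacuous there. For $n \ge 3$ the bound $r > n/2$ is genuine and is what the construction must contradict.

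For the construction I would let one opponent $j$ play \emph{ALLD} and the remaining $n-2$ opponents play \emph{ALLC}. The key observation is a one-sided payoff bound that needs no stationary-distribution computation: since $j$ defects in every stage and always faces at least the $n-2$ permanently cooperating \emph{ALLC} opponents, every realized stage payoff of $j$ is at least $R_{d,n-2} = r(n-2)/n$. Hence $\pi_j = \bm{\pi}_j \cdot \bm{v} \ge R_{d,n-2}$ for any limit distribution $\bm{v}$. A one-line algebraic step then shows $R_{d,n-2} \ge R_{c,n-1}$, since $r(n-2)/n \ge r-1$ is equivalent to $n \ge 2r$, i.e.\ to $r/n \le 1/2$.

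To close the argument I must defeat both clauses of condition~(3). The bound just derived gives $\pi_j \ge R_{c,n-1}$ under the hypothesis $r/n \le 1/2$, so the clause ``$\pi_j < R_{c,n-1}$ for every opponent $j$'' fails. It then remains to rule out the first clause, that all players share the payoff $R_{c,n-1}$. Here I would note that any \emph{ALLC} opponent $l$ always cooperates while facing at most $n-2$ cooperators, so $\pi_l \le R_{c,n-2} < R_{c,n-1}$; equivalently, because $j$ never cooperates the full-cooperation outcome $c^n$ is never realized and the expected total payoff stays strictly below $n R_{c,n-1}$. Either way not all $\pi_l$ equal $R_{c,n-1}$, so both clauses of~(3) fail and $\bm{p}$ cannot be cooperation enforcing.

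The main subtlety I expect is the boundary case $r/n = 1/2$, where $\pi_j$ can equal rather than strictly exceed $R_{c,n-1}$ (indeed equality occurs when player $i$ is driven to pure defection in this scenario). The contradiction at the boundary therefore rests entirely on condition~(3) demanding the \emph{strict} inequality $\pi_j < R_{c,n-1}$ in its second clause, and it is precisely this strictness that upgrades the necessary condition to the strict bound $r/n > 1/2$. I would also flag that the argument deliberately uses only the one-sided estimate $\pi_j \ge R_{d,n-2}$ rather than the exact value of $\bm{v}$, so no question of uniqueness or stationarity of the limit distribution enters the proof.
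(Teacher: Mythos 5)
Your proof is correct and takes essentially the same route as the paper's: the identical construction (one \emph{ALLD} opponent $j$ against $n-2$ \emph{ALLC} players plus the focal player), the same key bound $\pi_j \ge R_{d,n-2} \ge R_{c,n-1}$ under $r/n \le 1/2$, and a contradiction with condition (3) of Definition 4. You are in fact slightly more careful than the paper, which asserts the strict inequality $\pi_j > R_{c,n-1}$ (not guaranteed at the boundary $r/n = 1/2$ if the focal player ends up always defecting) and never explicitly refutes the first clause of condition (3); your observations that the \emph{ALLC} players earn at most $R_{c,n-2} < R_{c,n-1}$ and that the $n=2$ case is vacuous close those small gaps.
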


\begin{proof}
Assume that in a repeated public goods game with $\frac{r}{n} \le \frac{1}{2}$, $i$ adopts a memory-one strategy while $j$ uses \emph{ALLD} and the remaining players all use \emph{ALLC}.
In this case, no matter what $i$'s choice is in each round, $\pi_j$ is between $R_{d, n-2}$ and $R_{d, n-1}$.
From $\frac{r}{n} \le \frac{1}{2}$,  we have $R_{d, n-1} > R_{d, n-2}\ge R_{c, n-1}$.
As a result, $j$ could obtain an expected payoff $\pi_j > R_{c, n-1}$.
This contradicts point (3) in Definition \ref{def: stable_cooperation strategies}.
\end{proof}
The above two propositions indicate basic necessary conditions for the cooperation enforcing strategies, which could be convenient for verifying the suitable game structure and designing the strategies. Now we show that in a repeated PGG with perfect monitoring, if all players use certain cooperation enforcing strategies, they actually constitute an equilibrium. The equilibrium concept we use is Markov Perfect Equilibrium (MPE) \cite{maskin2001markov}, which is a refinement of subgame perfect equilibrium (SPE).
\begin{proposition}
	\label{proposition: MPE}
	If every player $i\in\{1, \cdots, n\}$ adopts a cooperation enforcing strategy $\bm{p}_i$, then the strategy profile $(\bm{p}_1, \bm{p}_2, \cdots, \bm{p}_n)$ is a Markov Perfect Equilibrium.
\end{proposition}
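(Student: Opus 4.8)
The plan is to exploit the two faces of a cooperation enforcing strategy: conditions (1)--(2) pin down the on-path play, while condition (3) together with Lemma~\ref{lemma: Akin's Lemma} controls every possible deviation. Recall that the MPE requirement is that, starting from any state (any previous-stage outcome), no player can improve her continuation payoff by a unilateral deviation. First I would nail down the equilibrium path: since every player cooperates in the first stage by condition (1) and satisfies $p_{c,n-1}=1$ by condition (2), a straightforward induction on the stage index shows that $c^n$ is produced in stage~$1$ and reproduced in every subsequent stage. Hence the play is absorbed at mutual cooperation and each player's long-run average payoff is $R_{c,n-1}=r-1$; this is the candidate equilibrium value for each player.

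Next I would bound what a deviator can achieve. Fix a player $j$ and let her switch to an arbitrary strategy while the remaining $n-1$ players retain their cooperation enforcing strategies; since $n \ge 2$ there is at least one such player $i \ne j$. Applying condition (3) of Definition~\ref{def: stable_cooperation strategies} to $\bm{p}_i$ against the (now possibly very complicated) behaviour of the others yields $\pi_j \le R_{c,n-1}$: either all players earn exactly $R_{c,n-1}$, or $j$, being an opponent of $i$, earns strictly less. Crucially, Lemma~\ref{lemma: Akin's Lemma} guarantees this relation holds irrespective of the outcome of the initial stage, so the same upper bound $R_{c,n-1}$ governs $j$'s continuation payoff from every state, not merely along the cooperative path. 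Comparing this bound with the on-path value $R_{c,n-1}$ then shows that no unilateral deviation is strictly profitable, which is precisely the MPE condition.

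The delicate part, where I would concentrate, is the off-path half of subgame perfection. The comparison above is clean only when the equilibrium continuation value from the starting state equals $R_{c,n-1}$; yet condition (3) merely forces a dichotomy, namely that from a given state either all players earn $R_{c,n-1}$ or (chaining condition (3) across different focal players) every player earns strictly less. Along the cooperative path we are always in the first branch, but from an arbitrary off-path state the second branch is a priori possible, and then the bound $\pi_j \le R_{c,n-1}$ no longer certifies that the cooperation enforcing strategy is a best response against the \emph{lower} equilibrium continuation value. I would close this gap by using the absorbing property of $c^n$ furnished by condition (2): I would argue that from every reachable state the cooperation enforcing profile eventually re-enters $c^n$, so that the first branch always applies and the equilibrium value is uniformly $R_{c,n-1}$. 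Proving this return-to-$c^n$ property for \emph{every} cooperation enforcing profile, rather than for a specific construction, is the main obstacle, since Definition~\ref{def: stable_cooperation strategies} constrains only the endpoint $p_{c,n-1}$ and the resulting payoffs, leaving the off-path transition probabilities free.
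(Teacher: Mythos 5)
Your first two paragraphs coincide with the paper's own argument: conditions (1)--(2) give, by induction, permanent play of $c^n$ on path with value $R_{c,n-1}$ for everyone, and condition (3) (grounded in Akin's Lemma, which holds irrespective of the initial outcome) caps any unilateral deviator at $R_{c,n-1}$, yielding a Nash equilibrium. Where you part ways with the paper is the off-path half of subgame perfection, and here you are in fact \emph{more} careful than the paper: its proof disposes of this step with the single unsupported sentence ``Since the game is infinitely repeated, the strategies actually constitute a sub-game perfect equilibrium,'' and then invokes Markov-ness to conclude MPE. So the ``delicate part'' you isolate is not an artifact of your attempt; it is precisely the step the paper's proof glosses over.

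Moreover, your suspicion that the return-to-$c^n$ property cannot be proved for every cooperation enforcing profile is well founded --- it actually fails, so the gap cannot be closed at the stated level of generality. Concretely, take $n\ge 3$, $r>\frac{n}{2}$, and let every player use the strategy with $p_{c,n-1}=1$, $p_{c,k}=0$ for all $k\le n-2$, $p_{d,0}=0$, and $p_{d,k}=\epsilon>0$ small for $k\ge 1$; Theorem~\ref{thm: main theorem} certifies this as cooperation enforcing, since the right-hand bounds there are strictly positive and $p_{d,0}=0$ satisfies its bound trivially. Under this profile the state $d^n$ is absorbing (everyone is at $(d,0)$ and $p_{d,0}=0$), so the equilibrium continuation value in that subgame is $0$ for all players. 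But a unilateral deviator who simply cooperates forever gives each opponent a cooperating neighbor: from any non-absorbed state, with probability at least $(1-\epsilon)^{n-1}$ all opponents defect in one stage, after which each sits at $(d,1)$ and cooperates with probability $\epsilon$, so $c^n$ is reached within two stages with probability at least $(1-\epsilon)^{n-1}\epsilon^{n-1}>0$; hence play is absorbed at $c^n$ almost surely and the deviator's average payoff is $R_{c,n-1}=r-1>0$. This is a strictly profitable deviation in the subgame at $d^n$ (perfectly consistent with condition (3), since the deviation induces $v_{c^n}=1$), so the profile is Nash in Markov strategies but not a Markov perfect equilibrium. The proposition thus needs an additional hypothesis --- for instance $p_{d,k}>0$ for all $k$ (plus the relevant $p_{c,k}<1$), so that the profile itself returns to $c^n$ from every state, which is exactly the repair you sketch --- or a weaker equilibrium notion. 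Note that the all--Grim-Trigger profile survives, because at $d^n$ a deviator also gets at most $0$, which may be why the gap went unnoticed.
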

\begin{proof}
According to (1) and (2) in Definition \ref{def: stable_cooperation strategies}, every player chooses cooperating in the first stage and will continue cooperating if all the others also cooperate in the previous stage, thus the $n$-player mutual cooperation recursively sustains and every player obtains the same expected payoff $R_{c, n-1}$.
It is also known from (3) that if some players adopt cooperation enforcing strategies, the others' payoffs cannot exceed $R_{c, n-1}$. Cooperation is always the best response and any unilateral deviation induces a payoff loss.
Thus a profile of cooperation enforcing strategies is a Nash equilibrium. Since the game is infinitely repeated, the strategies actually constitute a sub-game perfect equilibrium.
The cooperation enforcing strategies we defined are Markov strategies, therefore, the profile of cooperation enforcing strategies is a Markov perfect equilibrium and generates a stable mutual cooperation.
\end{proof}

Definition \ref{def: stable_cooperation strategies} describes our goal for designing the cooperation enforcing strategies but the conditions are still not operational enough. Now we begin to do some reasoning and derivations, which will finally help us to obtain the detailed constraints for the cooperation enforcing strategies. We first show the following two lemmas as stepping stones.
\begin{lemma}
	\label{lemma: implication}
	The third condition in Definition \ref{def: stable_cooperation strategies} is equivalent to a logical implication:
	\begin{equation}
	\label{eq: implication}
	\exists j \ne i, \pi_j \ge R_{c, n-1} \Rightarrow v_{c^n}=1.
	\end{equation}
\end{lemma}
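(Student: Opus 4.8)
The plan is to prove the equivalence by first isolating its only nontrivial ingredient: a characterization of when the limit distribution must be concentrated on the all-cooperate outcome. Concretely, I would first compute the \emph{total} stage payoff at an arbitrary outcome. Using Definition~\ref{def: Multi-Player Stage-Game Payoffs}, at an outcome with $m$ cooperators each cooperator sees $m-1$ cooperating opponents and each defector sees $m$, so summing $R_{c,m-1}$ over the $m$ cooperators and $R_{d,m}$ over the $n-m$ defectors makes the public-good term cancel and yields a total of $(r-1)m$. Since $r\in(1,n)$ gives $r-1>0$, this total is strictly increasing in $m$ and is uniquely maximized at $m=n$, where its value is $(r-1)n=nR_{c,n-1}$. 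Aggregating against the limit distribution $\bm v$ through $\pi_l=\bm{\pi}_l\cdot\bm v$, I obtain the social-optimum inequality
\begin{equation*}
\sum_{l=1}^{n}\pi_l=(r-1)\sum_{{\bm o}\in A^n}v_{\bm o}\,\#_c({\bm o})\le (r-1)n=nR_{c,n-1},
\end{equation*}
with equality if and only if $v_{\bm o}=0$ for every ${\bm o}$ with $\#_c({\bm o})<n$, i.e.\ if and only if $v_{c^n}=1$.

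With this in hand both directions are short. For the forward direction I would assume condition~(3) together with the premise of \eqref{eq: implication}, namely that some opponent $j$ satisfies $\pi_j\ge R_{c,n-1}$. This premise rules out the second alternative of condition~(3), so the first alternative must hold: $\pi_l=R_{c,n-1}$ for all $l$. Then $\sum_l\pi_l=nR_{c,n-1}$, and the equality case of the social-optimum inequality gives $v_{c^n}=1$, the desired conclusion. For the converse I would assume the implication and split on whether any opponent reaches $R_{c,n-1}$. If every opponent satisfies $\pi_j<R_{c,n-1}$, the second alternative of condition~(3) holds directly. Otherwise some opponent has $\pi_j\ge R_{c,n-1}$; the implication then forces $v_{c^n}=1$, and since the limit distribution is concentrated on $c^n$ every player's payoff equals the stage payoff there, so $\pi_l=R_{c,n-1}$ for all $l$, which is the first alternative. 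Hence condition~(3) holds in either case.

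I expect the main obstacle to be nailing down the equality case of the social-optimum inequality rather than the logical bookkeeping. One must verify that $\#_c({\bm o})<n$ \emph{strictly} lowers the per-outcome total (this strict monotonicity is exactly where the assumption $r>1$ from Definition~\ref{def: Multi-Player Stage-Game Payoffs} enters) and then argue that a convex combination attaining the maximum can place weight only on the unique maximizer. The remaining subtlety is purely notational: the payoff vectors $\bm{\pi}_l$ are indexed from each player's own viewpoint $(a_l,k)$, whereas $\bm v$ lives on the joint outcome space $A^n$, so I would be careful to evaluate each $\pi_l$ through the map sending ${\bm o}$ to player $l$'s pair $(a_l,\#_c({\bm a}_{-l}))$ before summing over $l$. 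Once the aggregation is written at the outcome level, the equality condition and both implications follow immediately.
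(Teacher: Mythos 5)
Your proof is correct, and its logical skeleton matches the paper's: both reduce the lemma to the equivalence $\forall l,\ \pi_l = R_{c,n-1} \Leftrightarrow v_{c^n}=1$ (the paper gets there by rewriting condition (3) as a disjunction and converting it to an implication; your two-direction case split is the same reasoning unrolled). Where you genuinely diverge is at that pivotal equivalence itself: the paper simply \emph{asserts} that the social optimum where every player obtains $R_{c,n-1}$ is reached only when $v_{c^n}=1$, with no supporting computation, whereas you actually prove it by summing stage payoffs over all players to get total welfare $(r-1)m$ at any outcome with $m$ cooperators, whence $\sum_l \pi_l = (r-1)\sum_{\bm o} v_{\bm o}\,\#_c({\bm o}) \le nR_{c,n-1}$, with equality forcing all probability mass onto the unique maximizer $c^n$. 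This welfare-aggregation step is the only nontrivial mathematics in the lemma --- a priori one must rule out a mixture over non-$c^n$ outcomes that delivers exactly $R_{c,n-1}$ to every player, and nothing in the paper's one-line justification excludes this --- and your argument closes it cleanly, while also making explicit that $r>1$ is precisely what makes the per-outcome total strictly increasing in the number of cooperators. Your note about evaluating each $\pi_l$ through the map ${\bm o}\mapsto(a_l,\#_c({\bm a}_{-l}))$ also quietly repairs a dimensional mismatch the paper elides when writing $\pi_i=\bm{\pi}_i\cdot\bm{v}$. In short: same decomposition as the paper, but with the crux proved rather than asserted, so your version is the more complete argument.
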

\begin{proof}
	The third condition in Definition \ref{def: stable_cooperation strategies} can be formally rewritten as a logical disjunction $(\forall j\ne i, \pi_j < R_{c, n-1}) \vee (\forall l, \ \pi_l = R_{c, n-1})$.
	According to basic logical theory, this disjunction is equivalent to an implication $\neg (\forall j\ne i, \pi_j < R_{c, n-1}) \Rightarrow (\forall l, \ \pi_l = R_{c, n-1}) $, which is further simplified as $\exists j \ne i, \pi_j \ge R_{c, n-1} \Rightarrow  \forall l, \ \pi_l = R_{c, n-1}$. Now we only need to prove the following equivalence relation
	\begin{equation}\label{eq: equivalence}
	\forall l, \ \pi_l = R_{c, n-1} \Leftrightarrow v_{c^n}=1.
	\end{equation}
Since the social optimum where every player obtains an expected payoff $R_{c, n-1}$ is reached only when mutual cooperation state is stable, i.e., $v_{c^n}=1$ and all the other elements in $\bm{v}$ are $0$, therefore eq. (\ref{eq: equivalence}) holds and this completes the proof.
\end{proof}



Lemma \ref{lemma: implication} is of great importance since it shows that the maximum payoff for any opponent $j$ is $\pi_j = R_{c, n-1}$ and is realized only when the $n$-player mutual cooperation is stable. Any type of unilateral deviation from cooperation will possibly incur payoff decrease, thus the temptation of free riding is suppressed. One can see that the cooperation enforcing strategy essentially sets up a payoff upper bound for each of the opponent. As long as the opponents are rational, in order to reach the payoff upper bound, they will sustain cooperation.

Now we know the basic mathematical constraints for $\bm v$. To derive the detailed constraints for the cooperation enforcing strategy $\bm p$, we need to further expand eq. (\ref{eq:Akin's Lemma}) in Akin's lemma and do some derivation and analysis.
To this end, we first need to analyze the game from a joint view of the focal player $i$ and a specific opponent $j$. Based on Definition \ref{def: Stage_Outcome_Representation}, we define the following marginal limit distribution.
\begin{definition}[Marginal Limit Distribution]
\label{def: marginal limit distribution}
Let $(a_ia_j, k)$ denote a stable outcome in which $i$ and $j$'s actions are $a_i$ and $a_j$, respectively, and there are $k=\#_c({\bm a}_{-ij}) \in M=\{0,1, \cdots, n-2\}$ cooperators among the other $n-2$ players. Denote by $u_{a_ia_j, k}$ the total probability of all such outcomes,
\begin{equation}
u_{a_ia_j, k} = \sum_{\substack{ \#_c(\bm{a}_{-ij})=k   }}v_{a_ia_j,\bm{a}_{-ij}} .
\end{equation}
Then $\bm{u} = (u_{a_ia_j, k})_{a_i, a_j \in A, k\in M}$ is the marginal limit distribution which is a $4(n-1)$-dimensional vector. Specifically, $u_{cc, n-2} = v_{c^n}$ and $u_{dd, 0} = v_{d^n}$.
\end{definition}

According to Definition \ref{def: marginal limit distribution}, now we can expand eq. (\ref{eq:Akin's Lemma}) and calculate the detailed relation between the game's limit distribution and a single player's strategy.
\begin{lemma}\label{lemma:rewrite_akin}
Given player $i$'s strategy $\bm{p}$ and the marginal limit distribution $\bm{u} = (u_{a_ia_j, k})_{a_i, a_k \in A, k\in M}$,
the formula in Akin's lemma for $\bm{p}$ can be rewritten as
\begin{equation}
\label{eq: adapted Akin's lemma}
\begin{aligned}
& (1-p_{c, n-2})u_{cd, n-2}\\
=& (-1 + p_{c, n-1})u_{cc, n-2} + (-1+p_{c, n-2})u_{cc, n-3}+ \cdots\\
&+ (-1+p_{c, k})(u_{cc, k-1} +  u_{cd, k})+p_{d, k}(u_{dc, k-1} + u_{dd, k})\\
&+ \cdots + p_{d, 0}u_{dd, 0}.
\end{aligned}
\end{equation}
Moreover, the difference between the expected payoff of $j$ and that of mutual cooperation is written as
\begin{equation}
\label{eq: gap between payoffs}
\begin{aligned}
& \pi_j - R_{c, n-1} \\
=& (R_{c, n-2}-R_{c, n-1})(u_{cc, n-3} + u_{dc, n-2})+\cdots \\
&+ (R_{c, k} - R_{c, n-1})(u_{cc, k-1} + u_{dc, k}) \\
&+ (R_{d, k} - R_{c, n-1})(u_{cd, k-1} + u_{dd, k}) + \cdots \\
&+ (R_{d, 0}-R_{c, n-1})u_{dd, 0}.
\end{aligned}
\end{equation}
\end{lemma}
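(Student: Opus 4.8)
The plan is to prove both displayed identities by directly expanding the abstract inner products $(\bm{p} - \bm{p}^{R}) \cdot \bm{v}$ (Lemma \ref{lemma: Akin's Lemma}) and $\pi_j = \bm{\pi}_j \cdot \bm{v}$ over the full $2^n$-dimensional outcome space, and then collapsing the outcome-indexed terms onto the $4(n-1)$ marginal coordinates $u_{a_ia_j,k}$ of Definition \ref{def: marginal limit distribution}. The one piece of structure that makes this work is an index translation. Since $\bm{p}$ assigns the same probability $p_{a_i,\#_c(\bm{a}_{-i})}$ to every outcome with a given $i$-action and a given number of cooperating opponents, I first express the count $\#_c(\bm{a}_{-i})$ that $i$ sees in terms of the joint index $k=\#_c(\bm{a}_{-ij})$ and $j$'s action: when $j$ cooperated the count is $k+1$, and when $j$ defected it is $k$. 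Substituting this together with $p^{R}_{c,k}=1$ and $p^{R}_{d,k}=0$ into $(\bm{p}-\bm{p}^{R})\cdot\bm{v}=0$, and grouping outcomes by the four classes $cc,cd,dc,dd$ via $u_{a_ia_j,k}=\sum_{\#_c(\bm{a}_{-ij})=k} v_{a_ia_j,\bm{a}_{-ij}}$, each $k\in M$ contributes exactly $(p_{c,k+1}-1)u_{cc,k}$, $(p_{c,k}-1)u_{cd,k}$, $p_{d,k+1}u_{dc,k}$, and $p_{d,k}u_{dd,k}$.

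Summing these over $k\in M$ and collecting the coefficient of each $p_{c,m}$ and each $p_{d,m}$ gives the right-hand side of eq. (\ref{eq: adapted Akin's lemma}); the stated form is merely this collected sum with the single term $(1-p_{c,n-2})u_{cd,n-2}$ transposed to the left-hand side. For eq. (\ref{eq: gap between payoffs}) I would run the same expansion on $\pi_j=\bm{\pi}_j\cdot\bm{v}$, except that $j$'s stage payoff is indexed by the count $\#_c(\bm{a}_{-j})$ that $j$ sees, which equals $k+1$ when $i$ cooperated and $k$ when $i$ defected — the roles of $i$ and $j$ are now swapped, so the ``$+1$'' shift attaches to $a_i$ rather than to $a_j$. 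This yields $\pi_j=\sum_{k\in M}\left[R_{c,k+1}u_{cc,k}+R_{d,k+1}u_{cd,k}+R_{c,k}u_{dc,k}+R_{d,k}u_{dd,k}\right]$. To pass to the difference with $R_{c,n-1}$, I write $R_{c,n-1}=R_{c,n-1}\cdot\sum_{a_i,a_j,k}u_{a_ia_j,k}$ using the fact that $\bm{u}$ is a probability vector summing to $1$, absorb $-R_{c,n-1}$ into each term, and regroup the coefficients of $R_{c,m}$ (which is $u_{cc,m-1}+u_{dc,m}$) and of $R_{d,m}$ (which is $u_{cd,m-1}+u_{dd,m}$). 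Note that the $R_{c,n-1}$ term drops out because $R_{c,n-1}-R_{c,n-1}=0$, which is precisely why the resulting sum begins at $R_{c,n-2}$, matching eq. (\ref{eq: gap between payoffs}).

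The derivation is entirely mechanical, so the main obstacle is not conceptual but a matter of careful bookkeeping. The easiest place to slip a sign or an index is in keeping the two distinct shifts straight — the count relative to $i$ in the Akin expansion versus the count relative to $j$ in the payoff expansion — and in handling the boundary conventions at $k=0$ and $k=n-1$. There, the nonexistent marginal entries $u_{cc,-1}$, $u_{dc,-1}$, $u_{cd,n-1}$, and $u_{dd,n-1}$ must be read as zero, so that the end terms such as $(p_{c,n-1}-1)u_{cc,n-2}$, $(p_{c,0}-1)u_{cd,0}$, $p_{d,0}u_{dd,0}$, and $(R_{d,0}-R_{c,n-1})u_{dd,0}$ emerge with the correct coefficients and no spurious contributions survive.
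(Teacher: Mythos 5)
Your proposal is correct and takes essentially the same route as the paper: the paper's own proof is just the terse remark that eq.~(\ref{eq: adapted Akin's lemma}) is ``directly derived'' from Akin's lemma and that eq.~(\ref{eq: gap between payoffs}) follows from $\pi_j - R_{c,n-1} = (\bm{\pi}_j - R_{c,n-1}\bm{1})\cdot\bm{v}$ by merging similar terms, which is exactly the expansion-and-regrouping you carry out. Your explicit index shifts (the $+1$ attaching to $a_j$ via $\#_c(\bm{a}_{-i})=k+[a_j=c]$ in the Akin expansion, and to $a_i$ via $\#_c(\bm{a}_{-j})=k+[a_i=c]$ in the payoff expansion), the use of $\sum u_{a_ia_j,k}=1$, and the boundary conventions ($u_{cc,-1}$, $u_{dc,-1}$, $u_{cd,n-1}$, $u_{dd,n-1}$ read as zero) are precisely the bookkeeping the paper leaves implicit, and they check out.
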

\begin{proof}
	Eq. (\ref{eq: adapted Akin's lemma}) uses notations in Definition \ref{def: marginal limit distribution} and is directly derived from Akin's lemma.
	Since $\pi_j - R_{c, n-1} = (\bm{\pi}_j - R_{c, n-1}\bm{1})\cdot \bm{v}$. Merge similar terms, then we have eq. (\ref{eq: gap between payoffs}).
\end{proof}

Based on the above stepping stones, now we can propose a solution for the cooperation enforcing strategies.

\begin{theorem}
	\label{thm: main theorem}
	In the repeated public goods game with $r>\frac{n}{2}$, if a memory-one strategy $\bm{p}$ cooperates in the first stage and satisfies the following constraints:
	\begin{equation}
	\label{eq: main theorem}
	\left \{
	\begin{lgathered}
		p_{c, n-1} =1\\
		p_{c, n-2} <1\\
		p_{d, n-1} < \frac{(1-p_{c, n-2})(R_{c, n-1} - R_{c, n-2})}{R_{d, n-1} - R_{c, n-1}}\\
		p_{d, n-2} < \frac{(1-p_{c, n-2})(R_{c, n-1} - R_{d, n-2})}{R_{d, n-1} - R_{c, n-1}}\\
		\cdots \\
		p_{d, k} < \frac{(1-p_{c, n-2})(R_{c, n-1} - R_{d, k})}{R_{d, n-1} - R_{c, n-1}}\\
		\cdots \\
		p_{d, 0} < \frac{(1-p_{c, n-2})(R_{c, n-1} - R_{d, 0})}{R_{d, n-1} - R_{c, n-1}}
	\end{lgathered}
	\right. ,
	\end{equation}
	then $\bm{p}$ is a cooperation enforcing strategy.
\end{theorem}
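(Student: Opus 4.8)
The plan is to verify the three defining conditions of Definition~\ref{def: stable_cooperation strategies} in turn. Conditions (1) and (2) are handed to us directly, since the theorem hypothesizes first-stage cooperation and the constraint $p_{c,n-1}=1$. All the work therefore goes into condition (3), and for this I would not attack the disjunction directly but instead invoke Lemma~\ref{lemma: implication}, which tells us it suffices to prove the implication $\exists j \ne i,\ \pi_j \ge R_{c,n-1} \Rightarrow v_{c^n}=1$. So I would fix an arbitrary opponent $j$ and study the quantity $\pi_j - R_{c,n-1}$ through the two identities of Lemma~\ref{lemma:rewrite_akin}.

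The central observation is that in the payoff-gap formula (\ref{eq: gap between payoffs}), every coefficient $R_{c,k}-R_{c,n-1}$ and $R_{d,k}-R_{c,n-1}$ with $k\le n-2$ is strictly negative \emph{except} the coefficient of the lone free-rider state $u_{cd,n-2}$ (in which $j$ alone defects), namely $\phi := R_{d,n-1}-R_{c,n-1}=1-\tfrac{r}{n}>0$; the coefficient of the mutual-cooperation state $u_{cc,n-2}=v_{c^n}$ is exactly zero. Strict negativity of the defecting coefficients $R_{d,k}-R_{c,n-1}$ is precisely where the hypothesis $r>\tfrac{n}{2}$ enters. The idea is to eliminate this single offending positive term. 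Setting $p_{c,n-1}=1$ annihilates the leading term of the adapted Akin identity (\ref{eq: adapted Akin's lemma}), leaving $(1-p_{c,n-2})u_{cd,n-2}$ written as a combination of the remaining marginal probabilities; multiplying through by $\tfrac{\phi}{1-p_{c,n-2}}>0$ (legitimate because $p_{c,n-2}<1$) and substituting for $\phi\,u_{cd,n-2}$ in (\ref{eq: gap between payoffs}) rewrites $\pi_j-R_{c,n-1}$ entirely as a linear form in $\bm{u}$ with the free-rider term removed.

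I would then collect the coefficient of each marginal state in this rewritten form and show it is nonpositive, with $v_{c^n}$ the only one equal to zero. The cooperating states $u_{cc,k}$ and $u_{cd,k}$ acquire a strictly negative payoff coefficient (again using $r>\tfrac{n}{2}$) plus a nonpositive Akin contribution proportional to $p_{c,\cdot}-1$, so they are fine. The dangerous states are the defecting ones, which pick up positive contributions proportional to $p_{d,\cdot}$: the coefficient of $u_{dd,k}$ becomes $(R_{d,k}-R_{c,n-1})+\tfrac{\phi}{1-p_{c,n-2}}p_{d,k}$, and demanding this be negative is \emph{exactly} the $k$-th constraint of (\ref{eq: main theorem}). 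The coefficient of $u_{dc,k-1}$ becomes $(R_{c,k-1}-R_{c,n-1})+\tfrac{\phi}{1-p_{c,n-2}}p_{d,k}$, which the same constraint controls with slack via the identity $R_{d,k}=R_{c,k-1}+1$ (valid since $e=1$); the boundary term $u_{dc,n-2}$ is handled separately by the first constraint on $p_{d,n-1}$, which is why that one carries $R_{c,n-2}$ rather than $R_{d,n-2}$ in its numerator.

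Once every coefficient is shown nonpositive with $v_{c^n}$ alone vanishing, the conclusion is immediate: writing $\pi_j-R_{c,n-1}=\sum_{s\ne c^n}\gamma_s u_s$ with each $\gamma_s<0$ and each $u_s\ge 0$ gives $\pi_j\le R_{c,n-1}$; and if we also assume $\pi_j\ge R_{c,n-1}$, the sum must equal zero, forcing every $u_s=0$ and hence $u_{cc,n-2}=v_{c^n}=1$ because $\bm{u}$ is a probability vector. This is exactly the implication required by Lemma~\ref{lemma: implication}, so condition (3) holds and $\bm{p}$ is cooperation enforcing. I expect the main obstacle to be the bookkeeping of the third paragraph: correctly matching each marginal probability $u_{a_ia_j,k}$ across the two formulas of Lemma~\ref{lemma:rewrite_akin} after the substitution, and in particular tracking the index shifts at the boundaries ($u_{dc,n-2}$, $u_{cc,n-2}$, $u_{cd,n-2}$), so that \emph{strictness} is preserved everywhere except at $v_{c^n}$ — strictness being precisely what upgrades $\pi_j\le R_{c,n-1}$ into the equality-forces-$v_{c^n}=1$ statement.
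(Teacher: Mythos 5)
Your proposal is correct and takes essentially the same route as the paper's own proof: reduce condition (3) via Lemma~\ref{lemma: implication}, use Lemma~\ref{lemma:rewrite_akin} to eliminate the lone positive free-rider term $u_{cd,n-2}$ by substituting the adapted Akin identity (legitimate since $p_{c,n-2}<1$), and demand every remaining coefficient be strictly negative, which reproduces exactly the constraints of eq.~(\ref{eq: main theorem}) while the candidate constraints on the $p_{c,k}$ become vacuous under $r>\frac{n}{2}$. Your explicit slack argument for the $u_{dc,k-1}$ coefficients via $R_{d,k}=R_{c,k-1}+1$, and your note that $u_{cd,n-2}=0$ is recovered from the Akin identity once the other marginals vanish, are details the paper leaves implicit, but the underlying argument is identical.
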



\begin{proof}
Conditions (1) and (2) in Definition \ref{def: stable_cooperation strategies} are trivially satisfied. We only need to prove $\bm{p}$ meets condition (3).

Suppose that $i$ adopts $\bm{p}$ and $j$ is an opponent with objective $\pi_j \ge  R_{c, n-1}  $. Let $\bm{u}$ be the marginal limit distribution. By Proposition \ref{proposition: strategy_p_2}, we have $1-p_{c, n-2}>0$. Thus multiplying both sides by $1-p_{c, n-2}$ does not change an inequality. After that, introduce eq. (\ref{eq: adapted Akin's lemma}) into $(1-p_{c, n-2})(\pi_j - R_{c, n-1})\ge 0$, we have
\begin{equation}
\begin{aligned}
& (1-p_{c, n-2})(\pi_j - R_{c, n-1}) \\
=& b_{cc, n-3}u_{cc, n-3} + \cdots + b_{dd, 0} u_{dd, 0}\ge 0,
\end{aligned}
\end{equation}
where each $b_{a_i a_j, k}$ is the coefficient of $u_{a_i a_j, k}$.
According to Lemma \ref{lemma: implication}, we only need to prove the following implication is TRUE.
\begin{equation}
\label{eq: final implicaiton}
\begin{aligned}
& b_{cc, n-3}u_{cc, n-3}  +\cdots + b_{dd, 0} u_{dd, 0} \ge 0\\
\Rightarrow &  u_{cc, n-3}=\cdots=u_{dd, 0}=0.
\end{aligned}
\end{equation}
Since each element of the marginal limit distribution $u_{a_i a_j, k} \ge 0$, one sufficient condition for Implication (\ref{eq: final implicaiton}) to be TRUE is that all the coefficients $b_{a_i a_j, k} < 0$. Note in Lemma \ref{lemma:rewrite_akin} each $b_{a_i a_j, k}$ is a function of the corresponding $p $.

There are two corresponding constraints for $p_{c, k}$: $b_{cc, k-1} <0$ and $b_{cd, k}<0$.
These two inequalities require:
\begin{equation}
\label{eq: p_ck}
\left \{
\begin{lgathered}
 p_{c, k} < \frac{(1-p_{c, n-2})(R_{c, n-1} - R_{c, k})}{R_{d, n-1} - R_{c, n-1}} + 1\\
 p_{c, k} < \frac{(1-p_{c, n-2})(R_{c, n-1} - R_{d, k+1})}{R_{d, n-1} - R_{c, n-1}} + 1
\end{lgathered}
\right. .
\end{equation}
Since Theorem \ref{thm: main theorem} requests $r > \frac{n}{2}$, we have $R_{c, n-1} > R_{d, k+1} > R_{c, k}$ for all $k \in \{1,2,\cdots, n-3\}$.
As a result, the right sides of the inequalities in eq. (\ref{eq: p_ck}) are all greater than 1 and the constraints are all vanished.
As for $p_{c, 0}$, there is only one constraint $b_{cd, 0}<0$ and we can immediately verify that the constraint is satisfied.
The same conclusion can be derived for $p_{c, n-2}$.

Likewise, for $p_{d, k}$, we have the following constraint
\begin{align}
p_{d, k} <  \frac{(1-p_{c, n-2})(R_{c, n-1} - R_{d, k})}{R_{d, n-1} - R_{c, n-1}}.
\end{align}
As for $p_{d, n-1}$, it should satisfy the constraint $b_{d c, n-2}<0$, which results in
\begin{align}
p_{d, n-1} < \frac{(1-p_{c, n-2})(R_{c, n-1} - R_{c, n-2})}{R_{d, n-1} - R_{c, n-1}}.
\end{align}

Furthermore, because $R_{c, n-1} > R_{d, n-2}> \cdots > R_{d, 0}$ and $R_{c, n-1}> R_{c, n-2}$, the right sides of the above inequalities are all greater than 0, which makes these constraints enforceable. Collecting all of the inequalities above, we complete the proof that Implication (\ref{eq: final implicaiton}) is TRUE and $\bm p$ satisfies the third condition in Definition \ref{def: stable_cooperation strategies}.
\end{proof}

Another issue that deserves proper discussion is about what if multiple opponents deviate together and even make collusion.
The main aim of collusion is to achieve a level that the joint payoff can be higher than the equilibrium one.
How to conquer the failure of mutual cooperation caused by collusion is a long-existing tough problem in both game theory and multi-agent systems.
We will see in the following theorem, the cooperation enforcing strategy has a very good performance, even when some players collude to deviate from the mutual cooperation equilibrium.
\begin{theorem}
	\label{thm: collusion theorem}
In a repeated public goods game, if a cooperation enforcing strategy $\bm p$ exists, then it is collusion resistant.
\end{theorem}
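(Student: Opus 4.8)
The plan is to first pin down what \emph{collusion resistant} should mean and then reduce the multilateral case to the single-opponent ceiling already secured by Definition \ref{def: stable_cooperation strategies} and Lemma \ref{lemma: implication}. I would define a colluding alliance as any nonempty coalition $C \subseteq \{1,\dots,n\}\setminus\{i\}$ of opponents who jointly choose their strategies so as to maximize their average payoff $\frac{1}{|C|}\sum_{j\in C}\pi_j$, and declare $\bm{p}$ collusion resistant if no such coalition can push this average strictly above the mutual-cooperation payoff $R_{c,n-1}$. This matches the paper's stated aim that any multilateral deviation from the social optimum should damage the alliance's utility.

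The central observation is that the per-opponent ceiling $\pi_j \le R_{c,n-1}$ guaranteed by condition (3) is robust to collusion. This is because the derivation behind it---Akin's Lemma (Lemma \ref{lemma: Akin's Lemma}), its expansion in Lemma \ref{lemma:rewrite_akin}, and the sign analysis in the proof of Theorem \ref{thm: main theorem}---constrains the limit distribution using \emph{only} the focal player's strategy $\bm{p}$; the opponents' strategies enter merely through the induced marginal limit distribution $\bm{u}$, whose entries are treated as arbitrary nonnegative numbers. Hence the implication of Lemma \ref{lemma: implication}, namely $\exists j \ne i, \ \pi_j \ge R_{c,n-1} \Rightarrow v_{c^n}=1$, holds no matter whether the remaining opponents act independently or in concert. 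I would state this invariance explicitly as the first step, since it is the conceptual hinge of the whole argument.

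Given this invariance, collusion resistance follows by a short contradiction. Suppose some coalition $C$ achieves $\frac{1}{|C|}\sum_{j\in C}\pi_j > R_{c,n-1}$. Then at least one member $j \in C$ must satisfy $\pi_j > R_{c,n-1}$, so in particular $\pi_j \ge R_{c,n-1}$. By the collusion-invariant form of Lemma \ref{lemma: implication} this forces $v_{c^n}=1$, i.e. the play is full cooperation with probability one, whence every player---$j$ included---receives exactly $R_{c,n-1}$, contradicting $\pi_j > R_{c,n-1}$. Therefore no coalition can exceed $R_{c,n-1}$ on average; the best attainable alliance payoff is $R_{c,n-1}$, realized only at global cooperation, and any deviation that shifts probability mass off $c^n$ weakly (and, whenever it alters $\bm{u}$, strictly) lowers the alliance's average. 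This is precisely the claimed collusion resistance.

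The step I expect to be the main obstacle is the first one: making the invariance-under-collusion claim airtight. One has to confirm that nothing in the chain Lemma \ref{lemma: Akin's Lemma} $\to$ Lemma \ref{lemma:rewrite_akin} $\to$ Theorem \ref{thm: main theorem} tacitly assumed a particular opponent profile---for instance that the opponents were themselves memory-one, or that a unique stationary distribution existed. Since Akin's Lemma is stated for an arbitrary limit distribution and the subsequent coefficient signs depend only on $\bm{p}$ and the payoff constants $R_{a_i,k}$, the bound genuinely holds for every opponent behavior; I would flag the case where the coalition's joint strategy produces a non-unique limit distribution and note that the argument applies to every limit point, so the conclusion is unaffected.
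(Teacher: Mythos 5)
Your proof is correct, but it takes a genuinely different route from the paper's. The paper does not invoke the payoff ceiling at all for the coalition; instead it uses Proposition \ref{proposition: r} to extract $r>\frac{n}{2}$ from the mere existence of $\bm{p}$, and then makes a direct stage-payoff computation: assuming the $n-m$ non-colluders (including the focal player) keep playing $c$ while $k$ of the $m$ colluders cooperate, the coalition's average payoff is $\tilde\pi= \frac{kR_{c,k+n-m-1}+(m-k)R_{d,k+n-m}}{m}$, and the shortfall works out to $\tilde\pi-R_{c,n-1}=\frac{(m-k)(n-mr)}{mn}$, which is positive only when $r<\frac{n}{m}\le\frac{n}{2}$ --- contradicting $r>\frac{n}{2}$. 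Your argument instead derives collusion resistance as an immediate corollary of the per-opponent ceiling: since Akin's Lemma and the coefficient analysis of Theorem \ref{thm: main theorem} constrain every limit distribution using only $\bm{p}$, the implication of Lemma \ref{lemma: implication} survives coordinated play, and an averaging-plus-contradiction step finishes it. Each approach buys something. Yours is more general and, strictly speaking, more airtight: it covers arbitrary history-dependent and even correlated coalition strategies, and it does not assume the non-colluding players continue to cooperate --- a tacit assumption in the paper's computation that really needs a ``best case for the colluders'' justification, since the focal player's strategy is reactive and may punish. The paper's computation, in exchange, yields a quantitative formula for the collusion loss and the threshold $r=\frac{n}{m}$, and it shows something your ceiling argument does not: internal defection is self-defeating for the coalition even if $\bm{p}$ never reacts, i.e.\ resistance holds statically and not only through dynamic enforcement. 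One caveat on your reading: you presume condition (3) of Definition \ref{def: stable_cooperation strategies} is quantified over all opponent profiles; the definition leaves this implicit, but it is the reading the paper itself relies on in proving Theorem \ref{thm: main theorem}, so your hinge step stands.
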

\begin{proof}
If a cooperation enforcing strategy exists, then by Proposition \ref{proposition: r} we have $r>\frac{n}{2}$.
Moreover, by Lemma \ref{eq: implication} we know that under one player's cooperation enforcing strategy, the best response of other players is to cooperate, and the outcome ${\bm o}=c^n$ is an equilibrium.
Then if there is no collusion, the optimal payoff for each player is simply $R_{c,n-1}$.
Assume that there are $m\ge 2$ players collude to improve their average payoff, and other $n-m$ players stick to the equilibrium action $c$. If in the $m$ collusive players, $k$ of them cooperate and $m-k$ defect, then there are totally $k+n-m$ cooperators and $m-k$ defectors. Then the average expected payoff for the collusive players is:
\begin{equation}
\label{eq: collusive_payoff}
\tilde\pi= \frac{{kR_{c,k + n - m - 1}  + (m - k)R_{d,k + n - m} }}{m}.
\end{equation}
The difference between the collusive payoff and the equilibrium (cooperative) payoff is:
\begin{equation}
\label{eq: collusive_payoff_difference}
\tilde\pi-R_{c,n-1}= \frac{{(m - k)\left( {n - mr} \right)}}{{mn}}.
\end{equation}
This difference indicates collusion is beneficial only when $r<\frac{n}{m}$, but this contradicts with the necessary condition for $\bm p$.
Therefore, in a repeated PGG, if a cooperation enforcing strategy $\bm p$ exists, then it is collusion resistant.
\end{proof}

\begin{figure}[htbp]
	\centering
	\includegraphics[width=0.5\linewidth]{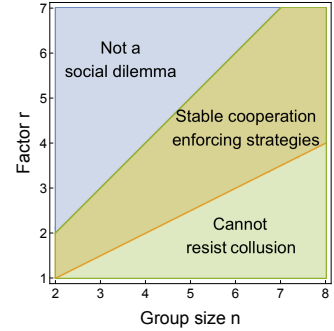}
	\caption{Schematic representation of parameter regions for the repeated public goods game.}
	\label{fig: r_n_relation}
\end{figure}

Figure \ref{fig: r_n_relation} shows in which parameter regions the cooperation enforcing strategies can exist, given that the public goods game forms a social dilemma (i.e., $1<r<n$).
When $\frac{n}{2}<r<n$, cooperation enforcing strategies always exist.
As $r$ approaches $\frac{n}{2}$ from the right side $r \to (\frac{n}{2})^+$, these strategies tend to behave like a grim trigger strategy.
When $r \le \frac{n}{2}$, one cannot promote mutual cooperation on her own and malicious collusive groups may gain the upper hand.

Specifically, based on the above results, we can identify multi-player versions for many well-known classic strategies in the two-player iterated prisoner's dilemma games, such as Grim Trigger (\emph{GT}) and Win-Stay Lose-Shift (\emph{WSLS}).
Based on Theorem \ref{thm: main theorem}, we can immediately obtain the following corollaries.

\begin{corollary}
(1) If $r>\frac{n}{2}$, an instantiation of the cooperation enforcing strategy $\bm p$ with $p_{c,n-1}=1$ and $p_{a_i, k}=0$ for all other states $(a_i, k)$ is a multi-player Grim Trigger (GT), and is collusion-resistant.\\
(2) If $r>\max \{\frac{n}{2},\frac{2n}{n+1}\}$, an instantiation of cooperation enforcing strategy $\bm p$ with $p_{c,n-1}=p_{d,n-1}=1$ and $p_{a_i, k}=0$ for all other states $(a_i, k)$ is a multi-player Win-Stay Lose-Shift (WSLS), and is collusion-resistant.
\end{corollary}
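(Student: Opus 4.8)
The plan is to treat both parts as instances of a single two-step verification: first show that the prescribed probability vector meets the constraints of Theorem \ref{thm: main theorem}, so that it is a cooperation enforcing strategy by that theorem (together with the requirement that the strategy cooperate in the first stage, which both instantiations satisfy by convention); and second, invoke Theorem \ref{thm: collusion theorem}, which states that the mere existence of a cooperation enforcing strategy already implies collusion resistance. Thus once the constraints are checked I obtain both claimed properties at once, and the only genuine work is verifying the inequalities of eq. (\ref{eq: main theorem}).

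For the Grim Trigger vector, with $p_{c,n-1}=1$ and every other component equal to $0$, the first two constraints are immediate: $p_{c,n-1}=1$ holds by construction and $p_{c,n-2}=0<1$ matches Proposition \ref{proposition: strategy_p_2}. Every remaining constraint has the form $p_{d,k}<(\text{bound})$ with $p_{d,k}=0$ on the left, so it suffices to check that each right-hand side is positive. Here I would record the payoff gaps $R_{d,n-1}-R_{c,n-1}=1-\tfrac{r}{n}$ and $R_{c,n-1}-R_{d,k}$, observe that the denominator is positive since $r<n$, and note that $r>\tfrac{n}{2}$ forces $R_{c,n-1}>R_{d,n-2}\ge R_{d,k}$ for all $k\le n-2$, making every numerator positive. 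Hence all bounds hold whenever $r>\tfrac{n}{2}$, the strategy is cooperation enforcing, and collusion resistance follows from Theorem \ref{thm: collusion theorem}.

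The Win-Stay Lose-Shift vector is the same except that now $p_{d,n-1}=1$ rather than $0$. Every constraint other than the one on $p_{d,n-1}$ is verified exactly as in the Grim Trigger case and again only needs $r>\tfrac{n}{2}$. The decisive difference is that the bound on $p_{d,n-1}$ must now be strictly larger than $1$, not merely positive: I would substitute $p_{d,n-1}=1$ and $p_{c,n-2}=0$ into the corresponding inequality of eq. (\ref{eq: main theorem}) and ask for which $r$ its right-hand side exceeds $1$. Carrying out this comparison on the relevant payoff differences is what pins down the extra threshold, and collecting it with the $r>\tfrac{n}{2}$ coming from the other constraints yields the stated condition $r>\max\{\tfrac{n}{2},\tfrac{2n}{n+1}\}$; collusion resistance then again follows from Theorem \ref{thm: collusion theorem}.

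I expect the main obstacle to be exactly this last step for WSLS. Because the move $p_{d,n-1}=1$ forgives a player who defected while everyone else cooperated, one must rule out that a single opponent exploits this forgiveness, cycling between free-riding and being forgiven, to push its long-run payoff up to or beyond $R_{c,n-1}$ without driving the system to full cooperation. Translating this informal worry into the sign of the relevant coefficient $b_{dc,n-2}$ (equivalently, into the requirement that the $p_{d,n-1}$ bound clear $1$) is the delicate point, whereas all the other coefficients reduce to the same positivity checks already handled for Grim Trigger.
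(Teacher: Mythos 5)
Your overall plan---verify the constraints of Theorem \ref{thm: main theorem} for each probability vector, then get collusion resistance for free from Theorem \ref{thm: collusion theorem}---is exactly the paper's route; the paper's own proof is even terser, simply asserting that both vectors satisfy eq. (\ref{eq: main theorem}) and citing Theorem \ref{thm: collusion theorem}. Your Grim Trigger verification is complete and correct: with every $p_{d,k}=0$ it suffices that each right-hand side is positive, which follows from $r<n$ (so the denominator $R_{d,n-1}-R_{c,n-1}=1-\tfrac{r}{n}$ is positive) and $r>\tfrac{n}{2}$ (so $R_{c,n-1}>R_{d,n-2}\ge R_{d,k}$ for all $k\le n-2$).

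The genuine flaw is in the single computation you deferred for WSLS, whose outcome you asserted rather than checked. Substituting $p_{c,n-2}=0$ into the $p_{d,n-1}$ constraint gives the bound $\frac{R_{c,n-1}-R_{c,n-2}}{R_{d,n-1}-R_{c,n-1}}=\frac{r/n}{1-r/n}=\frac{r}{n-r}$, and $\frac{r}{n-r}>1$ holds if and only if $r>\tfrac{n}{2}$. So this constraint does \emph{not} ``pin down'' the extra threshold $\tfrac{2n}{n+1}$ as you claim: for the literal strategy $p_{c,n-1}=p_{d,n-1}=1$, all of eq. (\ref{eq: main theorem}) is already satisfied whenever $r>\tfrac{n}{2}$, and the stated hypothesis $r>\max\{\tfrac{n}{2},\tfrac{2n}{n+1}\}$ is simply stronger than necessary (note $\tfrac{2n}{n+1}>\tfrac{n}{2}$ only for $n=2$). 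The threshold $\tfrac{2n}{n+1}$ actually emerges from the $p_{d,0}$ constraint if one instead sets $p_{d,0}=1$---the classical Pavlov rule ``shift back to cooperation after mutual defection''---since $1<\frac{R_{c,n-1}-R_{d,0}}{R_{d,n-1}-R_{c,n-1}}=\frac{n(r-1)}{n-r}$ if and only if $r>\tfrac{2n}{n+1}$; this strongly suggests the corollary's statement intends $p_{d,0}=1$, and had you carried out your promised substitution you would have hit this discrepancy rather than confirmed your narrative. Your proof still technically establishes the corollary as written, because the stated hypothesis implies $r>\tfrac{n}{2}$ and hence all constraints of Theorem \ref{thm: main theorem} hold; but the claimed provenance of the threshold is wrong, and your informal gloss of $p_{d,n-1}$ as ``forgiving a player who defected'' also misreads the state: $(d,n-1)$ records the \emph{focal player's own} defection against $n-1$ cooperating opponents, not an opponent's free-riding, even though your formal target $b_{dc,n-2}<0$ is the correct coefficient to examine.
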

\begin{proof}
The proof for these two points is straightforward. Grim Trigger with $p_{c,n-1}=1$ and $p_{a_i, k}=0$ for all other $k$ satisfies all the constraints in Eqs. (\ref{eq: main theorem}). Win-Stay Lose-Shift also satisfies these constraints. Theorem \ref{thm: collusion theorem} indicates these two strategies are all collusion resistant.
\end{proof}

It is worth noting that, as far as we know, there are no widely accepted explicit definitions of multi-player versions of \emph{WSLS} and \emph{GT}. Take \emph{WSLS} as an example, what are ``win'' and ``lose'' is correlated with payoffs of all players, which is further affected by how many other players cooperate and defect. Thus, there could be more general definitions of the terms ``win'' and ``lose''. As shown in the above two corollaries, with the help of Theorem 1, we can at least identify multiple strategies which possess the good features that traditional \emph{WSLS} also has. These \emph{WSLS}-like strategies could be generalized versions of the traditional \emph{WSLS} into multi-player games.

Although Grim Trigger strategies can form a Markov perfect equilibrium, as this kind of strategy has no tolerance towards other players' mistakes, it may lead to poor results in the real world especially when there is noise and uncertainty.
\emph{WSLS} behaves more kindly than \emph{GT} and is more robust in noisy environments \cite{nowak1993strategy}.

To give an intuitive understanding of the effect of the cooperation enforcing strategies, now we use \emph{WSLS} to make a case study. In Figure \ref{fig: WSLS}, we show an example of how the focal player $x$ adopting a cooperation enforcing strategy constrains the expected payoffs of her opponents $y$ and $z$.
Player $x$ uses \emph{WSLS} strategy, whereas the strategy of $y$ and $z$ are randomly sampled from the space of memory-one strategies $10^5$ times.
The cube in the graph is the space of three players' payoff tuple $(\pi_x, \pi_y, \pi_z)$, with axises x, y and z representing the payoffs of player $x$, $y$ and $z$ respectively.
Each blue dot corresponds to a specific expected payoff tuple.
The pink plane $\pi_y=R_{c, n-1}$ on the left and the yellow one $\pi_z=R_{c, n-1}$ on the top illustrate the upper bounds of $y$ and $z$'s payoffs, respectively.
We can see the area of blue dots never crosses these two planes, which indicates that none of the opponents can obtain a payoff greater than that of mutual cooperation.
The social optimum $(R_{c,n-1}, R_{c, n-1}, R_{c, n-1})$ is a red point, which is the only point of intersection of the blue region and the two bound planes.

\begin{figure}[htbp!]
	\centering
	\includegraphics[width=0.9\linewidth]{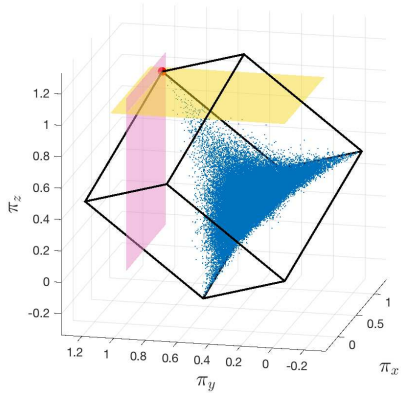}
	\caption{Illustration of a cooperation enforcing strategy in a 3-player repeated public goods game with $r=2$.
	The strategy of the focal player $x$ is \emph{WSLS} while the strategies of $y$ and $z$ are sampled randomly from the space of memory-one strategies.
	 Each blue dot corresponds to a feasible expected payoff tuple $(\pi_x, \pi_y, \pi_z)$ and the coordinate of the red point in the corner is $(R_{c, n-1}, R_{c, n-1}, R_{c,n-1})$.
	The pink (left) plane and the yellow (top) one are $\pi_y=R_{c, n-1}$ and $\pi_z = R_{c, n-1}$ respectively.}
	\label{fig: WSLS}
\end{figure}

\section{Cooperation Enforcement on Learning and Collusive Players}
In the previous sections, we theoretically identified the cooperation enforcing and collusion resistant strategies and found that a collection of such strategies constitutes a Markov perfect equilibrium.
In the real world multi-agent systems, if a player has no idea of the delicately designed strategies, being a learning player and gradually improving her own policy based on interaction history is a more realistic choice. In this section, we will show that, against the cooperation enforcing strategies, even the non-strategic learning opponent can gradually learn to cooperate.

From a learning player's point of view, a repeated game can be regarded as a sequential decision-making under uncertainty.
In each stage game, she makes a decision according to the history of interactions with her opponents.
In this paper, we only consider the history consisting of the previous stage game outcome.
The uncertainty comes from the inherent stochastic properties or evolution mechanisms of other players' policies.
For the learning player, her optimal plan in this MDP of the game can be solved by using dynamic programming (DP) based on the Bellman optimality equation \cite{mahadevan1996average}
\begin{equation}\label{eq: Q}
Q^*(\bm{o}, a) = \max_{a'\in A} \, \mathbb{E} \left [ R_{\bm{o}'} - R^* + Q^*(\bm{o}', a')\right].
\end{equation}
Here $Q^*(\bm{o}, a)$ is the action value function which estimates the score for selecting action $a$ after observing the previous stage game outcome $\bm{o}$.
After executing $a$, the learning player receives a result of the current stage $\bm{o}'$ and obtains a payoff $R_{\bm{o}'}$.
$R^*$ is the optimal average payoff.
There are many researches for solving the Bellman optimality equation and reinforcement learning is one which recently attracts much attention \cite{sutton2018reinforcement}. We use the average reward reinforcement learning approach \cite{gosavi2004reinforcement} and implement it in Algorithm \ref{alg: RL}.
\begin{algorithm}[htbp!]
	
	\label{alg: RL}
	\caption{A Learning Player's Strategy}
	
	Initialize a matrix: $Q(\bm{o}, a) \leftarrow 0$ for all $\bm{o}\in A^n, a\in A$ \;
	Initialize an estimate of the average payoff $\bar{R} \leftarrow 0$ \;
	Set outcome of the initial stage game $\bm{o}(0) \leftarrow c^n$\;
	Set the learning rate parameters $\alpha, \beta$\;
	
	\For{$t=1,2,\cdots$}{
		Take action $a$ with $\epsilon$-greedy policy based on $Q(\bm{o}(t-1), a)$\;
		Receive stage game outcome $\bm{o}(t)$ and payoff $R$\;
		$\delta \leftarrow R - \bar{R} + \max_{a'} Q(\bm{o}(t), a') - Q(\bm{o}(t-1), a)$\;
		$Q(\bm{o}(t-1), a) \leftarrow Q(\bm{o}(t-1), a) + \alpha \delta$\;
		\If{$Q(\bm{o}(t-1), a) = \max_{a'} Q(\bm{o}(t-1), a)$}{
			$\bar{R} \leftarrow (1-\beta)\bar{R} + \beta[ (t-1)\bar{R} + R]/t$;
	}
	}
	
\end{algorithm}

\begin{figure*}[tbp!]
	\centering
	\includegraphics[width=0.33\textwidth]{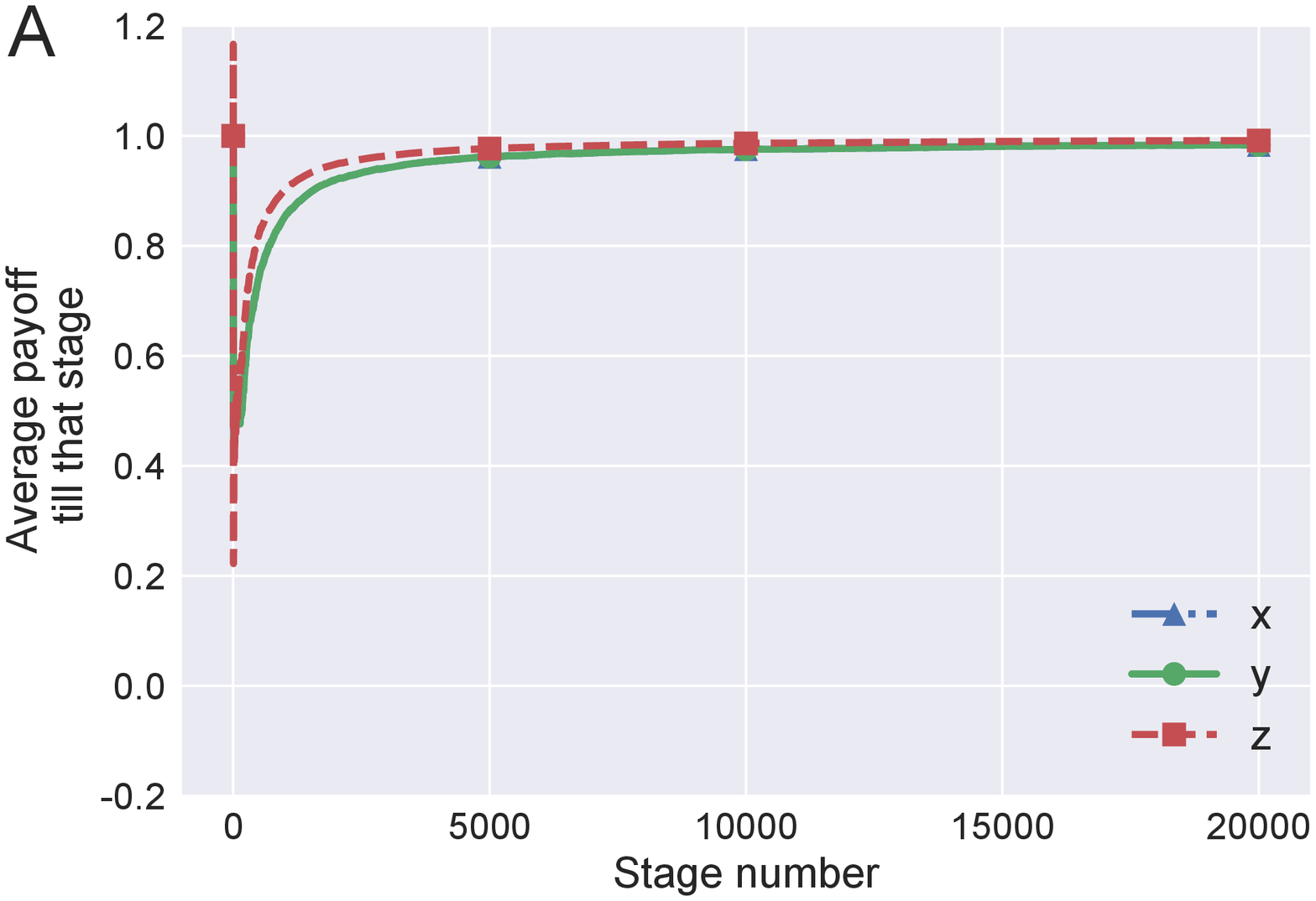}
	\includegraphics[width=0.33\textwidth]{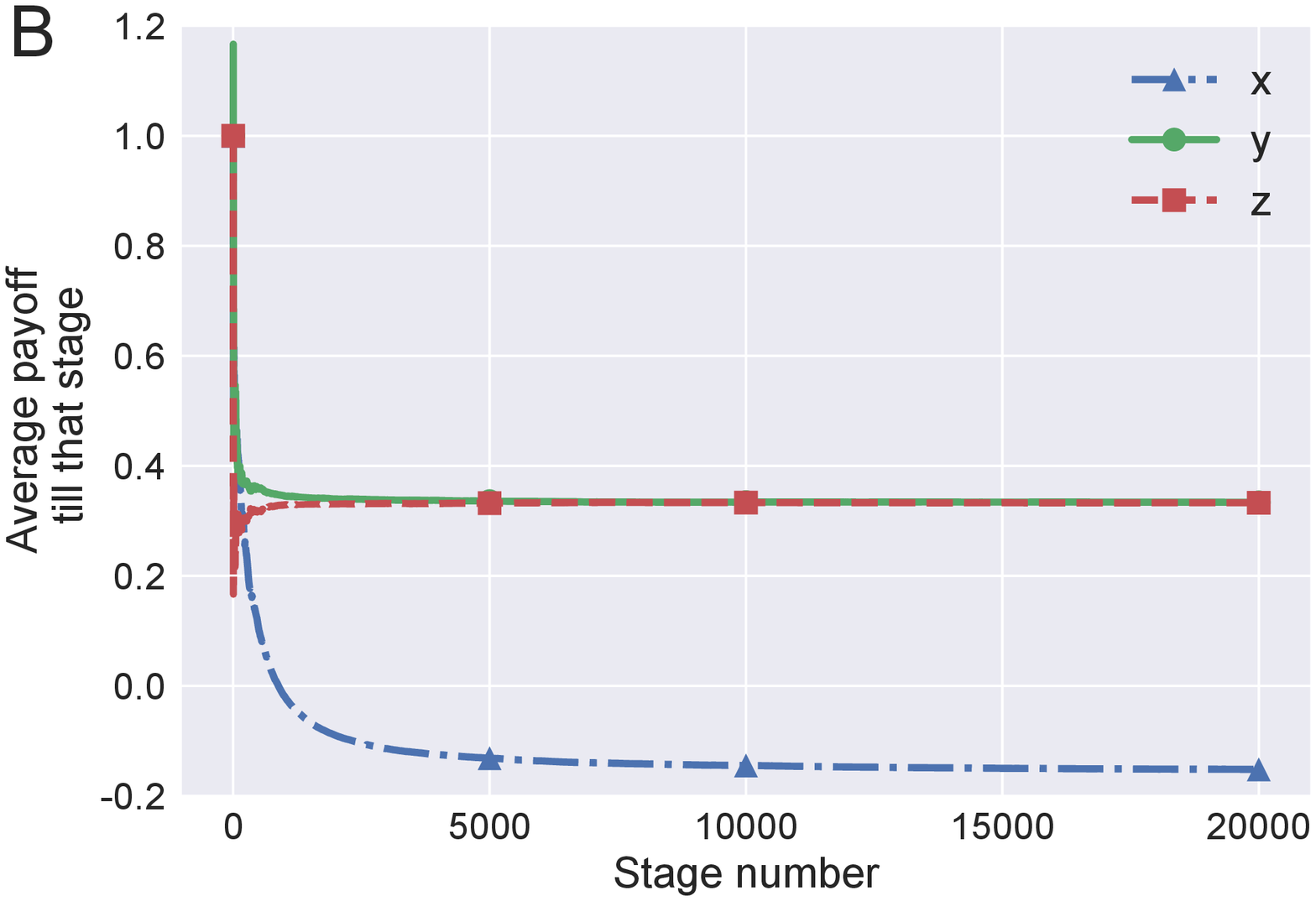}
	\includegraphics[width=0.33\textwidth]{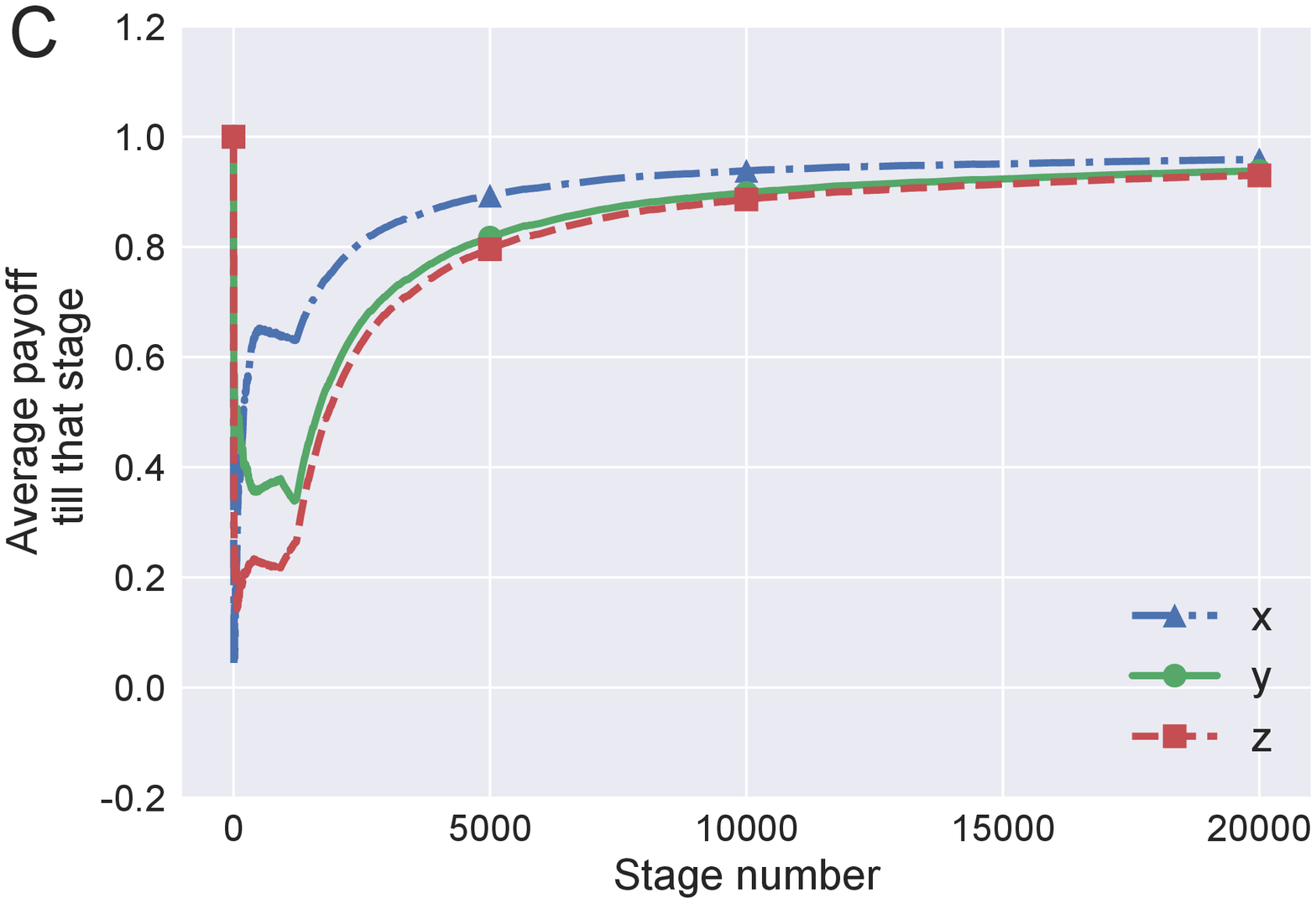}
	\caption{Illustration of average payoffs during the 3-player repeated public goods game (with $r=2$) in three different scenarios.
	The payoff of mutual cooperation is $R_{c, n-1}=1$.
	In each panel, the strategy of player $x$ is fixed to a specific cooperation enforcing strategy \emph{WSLS} whereas the policies of $y$ and $z$ can be \emph{WSLS} or reinforcement learning.
	(A) $y$ uses \emph{WSLS} while $z$ applies reinforcement learning.
	(B) $y$ and $z$ independently evolve their own policies via reinforcement learning.
	(C) In this scenario, the strategy of $x$ is declared beforehand whereas $y$ and $z$ are learning followers who make an alliance.}
	\label{fig: evolution}
\end{figure*}
%

To investigate whether cooperation enforcing strategies can bring about mutual cooperation when their adopters confront learning players, several scenarios need to be analyzed and simulated.
For the first scenario, to analyze the equilibrium, we let all players adopt cooperation enforcing strategies, except for one who uses a reinforcement learning approach.
For the second scenario, there are more independent learning players, i.e., some players use cooperation enforcing strategies while others independently adopt reinforcement learning strategies.
For the third scenario, we assume the cooperation enforcing strategy be declared in advance and analyze its performance when the followers are learning players, which is also known as Stackelberg game \cite{demiguel2009stochastic}.
The simulation results are shown in Figure \ref{fig: evolution}.

In Proposition \ref{proposition: MPE}, we have known that when all players adopt cooperation enforcing strategies, they are in an equilibrium and a player who deviates cannot obtain a payoff greater than that of mutual cooperation.
There remains a question here: if the deviator is not malicious but just a learning player who has no idea of the cooperation enforcing strategies, can she eventually learn to cooperate?
The answer should be yes.
If the strategies of all the other players are fixed to cooperation enforcing strategies, the environment for the only learning player is not that complicated.
Besides, mutual cooperation is the only optimal solution to this scenario.
Therefore, as long as the reinforcement learning player sufficiently explores this environment, she will generate a ``near-optimal'' solution to eq. (\ref{eq: Q}) and evolve to cooperation.
Figure \ref{fig: evolution}(A) illustrates that a learning player learns to cooperate quickly and shares the optimum with all the cooperation enforcing strategy players.

To analyze the case of more complicated scenarios where multiple learning players exist, we set up a simulation in which only a portion of players use cooperation enforcing strategies while others learn.
In this case, however, learning players can not finally find the optimal solution.
This is because they optimize their own policies independently, which will lead to a failure of obtaining the dynamic global knowledge of the whole game system.
If other players adapt their strategies, the environment for one player will be non-stationary and the optimization target for the learning will shift. Actually, multi-agent learning is still a challenging issue in the artificial intelligence community.
Figure \ref{fig: evolution}(B) shows an example of the collapse of mutual cooperation, in which multiple players independently learn their own policies cannot lead to mutual cooperation.

Although it seems difficult for cooperation enforcing strategies to bring about cooperation when their adopters confront multiple independent learning opponents, we can conquer this difficulty by establishing a Stackelberg game.
In this Stackelberg setting, some players need to declare their cooperation enforcing strategies in advance and then the learning players can explore and evolve their strategies.
Eventually, all the learning players will find the best responses to their declared policies, which is to fully cooperate.
We refer to those players who commit strategies as leaders and the others as followers.
Because followers are witting of the strategies of leaders, they can even coordinate their policies with each other and work in collusion to compete with leaders.
We can obviously extend the reinforcement learning method for a single player to this alliance.
The fixed strategies of the leaders produce a stationary environment for the alliance.
Therefore, mutual cooperation can be enforced with a learning alliance.
Figure \ref{fig: evolution}(C) provides an instance that if cooperation enforcing strategies are Stackelberg leader strategies, they do promote cooperation.

\section{Conclusions and Future Work}
We present a class of delicately designed Markov strategies for repeated games.
Using such a strategy, a single player can simultaneously constrain the utilities of all her opponents and enforce mutual cooperation among all of players. The optimal value of any opponent's utility can only be realized through all players' cooperating.
Moreover, we prove that as long as there exists at least one player adopting such a strategy, any type of colluding alliance cannot get a higher payoff than that of $n$-player mutual cooperation.
Our results also show that these strategies can still promote cooperation, even when the opponents are non-strategic learning players. Although we are using the conventional repeated public goods game model, the approaches in our paper could be easily extended to various multi-player games. Immediate future work is about the effect of a non-linear marginal per-capita rate of return on the group cooperation, the generalization of cooperation enforcement in stochastic games with large action space \cite{mcavoy2016autocratic}, as well as the extension into multi-agent systems where agents make use of different orders of theory of mind \cite{albrecht2018autonomous}. Moreover, further research on multi-agent reinforcement learning \cite{perolat2017multi} could be potentially inspired by this work.

%

\bibliographystyle{aaai}
\bibliography{cooperator-aaai19}

\begin{thebibliography}{}

\bibitem[\protect\citeauthoryear{Akin}{2012}]{akin2012stable}
Akin, E.
\newblock 2012.
\newblock Stable cooperative solutions for the iterated prisoner's dilemma.
\newblock {\em arXiv preprint} 1211.0969.

\bibitem[\protect\citeauthoryear{Albrecht and
  Stone}{2018}]{albrecht2018autonomous}
Albrecht, S.~V., and Stone, P.
\newblock 2018.
\newblock Autonomous agents modelling other agents: A comprehensive survey and
  open problems.
\newblock {\em Artificial Intelligence} 258:66--95.

\bibitem[\protect\citeauthoryear{Dawes}{1980}]{dawes1980social}
Dawes, R.~M.
\newblock 1980.
\newblock Social dilemmas.
\newblock {\em Annual Review of Psychology} 31(1):169--193.

\bibitem[\protect\citeauthoryear{DeMiguel and
  Xu}{2009}]{demiguel2009stochastic}
DeMiguel, V., and Xu, H.
\newblock 2009.
\newblock A stochastic multiple-leader stackelberg model: analysis,
  computation, and application.
\newblock {\em Operations Research} 57(5):1220--1235.

\bibitem[\protect\citeauthoryear{Dietz, Ostrom, and
  Stern}{2003}]{dietz2003struggle}
Dietz, T.; Ostrom, E.; and Stern, P.~C.
\newblock 2003.
\newblock The struggle to govern the commons.
\newblock {\em Science} 302(5652):1907--1912.

\bibitem[\protect\citeauthoryear{Epstein}{2001}]{epstein2001learning}
Epstein, J.~M.
\newblock 2001.
\newblock Learning to be thoughtless: Social norms and individual computation.
\newblock {\em Computational Economics} 18(1):9--24.

\bibitem[\protect\citeauthoryear{Gosavi}{2004}]{gosavi2004reinforcement}
Gosavi, A.
\newblock 2004.
\newblock Reinforcement learning for long-run average cost.
\newblock {\em European Journal of Operational Research} 155(3):654--674.

\bibitem[\protect\citeauthoryear{Hardin}{1968}]{hardin1968tragedy}
Hardin, G.
\newblock 1968.
\newblock The tragedy of the commons.
\newblock {\em Science} 162(3859):1243--1248.

\bibitem[\protect\citeauthoryear{Hilbe \bgroup et al\mbox.\egroup
  }{2014}]{hilbe2014cooperation}
Hilbe, C.; Wu, B.; Traulsen, A.; and Nowak, M.~A.
\newblock 2014.
\newblock Cooperation and control in multiplayer social dilemmas.
\newblock {\em Proceedings of the National Academy of Sciences}
  111(46):16425--16430.

\bibitem[\protect\citeauthoryear{Isaac, Walker, and
  Thomas}{1984}]{isaac1984divergent}
Isaac, R.~M.; Walker, J.~M.; and Thomas, S.~H.
\newblock 1984.
\newblock Divergent evidence on free riding: An experimental examination of
  possible explanations.
\newblock {\em Public Choice} 43(2):113--149.

\bibitem[\protect\citeauthoryear{Kraus}{1997}]{kraus1997negotiation}
Kraus, S.
\newblock 1997.
\newblock Negotiation and cooperation in multi-agent environments.
\newblock {\em Artificial Intelligence} 94(1-2):79--97.

\bibitem[\protect\citeauthoryear{Ledyard}{1994}]{ledyard1994public}
Ledyard, J.~O.
\newblock 1994.
\newblock Public goods: A survey of experimental research.
\newblock {\em Handbook of Experimental Economics}  111--251.

\bibitem[\protect\citeauthoryear{Leibo \bgroup et al\mbox.\egroup
  }{2017}]{leibo2017multi}
Leibo, J.~Z.; Zambaldi, V.; Lanctot, M.; Marecki, J.; and Graepel, T.
\newblock 2017.
\newblock Multi-agent reinforcement learning in sequential social dilemmas.
\newblock In {\em AAMAS 2017},  464--473.

\bibitem[\protect\citeauthoryear{Mahadevan}{1996}]{mahadevan1996average}
Mahadevan, S.
\newblock 1996.
\newblock Average reward reinforcement learning: Foundations, algorithms, and
  empirical results.
\newblock {\em Machine Learning} 22(1-3):159--195.

\bibitem[\protect\citeauthoryear{Maskin and Tirole}{2001}]{maskin2001markov}
Maskin, E., and Tirole, J.
\newblock 2001.
\newblock Markov perfect equilibrium: I. observable actions.
\newblock {\em Journal of Economic Theory} 100(2):191--219.

\bibitem[\protect\citeauthoryear{McAvoy and
  Hauert}{2016}]{mcavoy2016autocratic}
McAvoy, A., and Hauert, C.
\newblock 2016.
\newblock Autocratic strategies for iterated games with arbitrary action
  spaces.
\newblock {\em Proceedings of the National Academy of Sciences}
  113(13):3573--3578.

\bibitem[\protect\citeauthoryear{Nowak and Sigmund}{1993}]{nowak1993strategy}
Nowak, M., and Sigmund, K.
\newblock 1993.
\newblock A strategy of win-stay, lose-shift that outperforms tit-for-tat in
  the prisoner's dilemma game.
\newblock {\em Nature} 364(6432):56.

\bibitem[\protect\citeauthoryear{Nowak}{2006}]{nowak2006five}
Nowak, M.~A.
\newblock 2006.
\newblock Five rules for the evolution of cooperation.
\newblock {\em Science} 314(5805):1560--1563.

\bibitem[\protect\citeauthoryear{Olson}{2009}]{olson2009logic}
Olson, M.
\newblock 2009.
\newblock {\em The Logic of Collective Action: Public Goods and the Theory of
  Groups}, volume 124.
\newblock Harvard University Press.

\bibitem[\protect\citeauthoryear{Panait and Luke}{2005}]{panait2005cooperative}
Panait, L., and Luke, S.
\newblock 2005.
\newblock Cooperative multi-agent learning: The state of the art.
\newblock {\em Autonomous Agents and Multi-Agent Systems} 11(3):387--434.

\bibitem[\protect\citeauthoryear{Perolat \bgroup et al\mbox.\egroup
  }{2017}]{perolat2017multi}
Perolat, J.; Leibo, J.~Z.; Zambaldi, V.; Beattie, C.; Tuyls, K.; and Graepel,
  T.
\newblock 2017.
\newblock A multi-agent reinforcement learning model of common-pool resource
  appropriation.
\newblock In {\em NIPS 2017},  3643--3652.

\bibitem[\protect\citeauthoryear{Santos, Pacheco, and
  Santos}{2018}]{santos2018social}
Santos, F.~P.; Pacheco, J.~M.; and Santos, F.~C.
\newblock 2018.
\newblock Social norms of cooperation with costly reputation building.
\newblock In {\em AAAI 2018}.

\bibitem[\protect\citeauthoryear{Santos}{2017}]{santos2017social}
Santos, F.~P.
\newblock 2017.
\newblock Social norms of cooperation in multiagent systems.
\newblock In {\em AAMAS 2017},  1859--1860.

\bibitem[\protect\citeauthoryear{Sutton and
  Barto}{2018}]{sutton2018reinforcement}
Sutton, R.~S., and Barto, A.~G.
\newblock 2018.
\newblock {\em Reinforcement Learning: An Introduction}.
\newblock MIT press.

\bibitem[\protect\citeauthoryear{Telser}{2017}]{telser2017competition}
Telser, L.~G.
\newblock 2017.
\newblock {\em Competition, Collusion, and Game Theory}.
\newblock Routledge.

\bibitem[\protect\citeauthoryear{Turner}{1993}]{turner1993tragedy}
Turner, R.~M.
\newblock 1993.
\newblock {\em The Tragedy of the Commons and Distributed AI Systems}.
\newblock Department of Computer Science, University of New Hampshire.

\bibitem[\protect\citeauthoryear{Wooldridge}{2009}]{wooldridge2009introduction}
Wooldridge, M.
\newblock 2009.
\newblock {\em An Introduction to Multiagent Systems}.
\newblock John Wiley \& Sons.

\bibitem[\protect\citeauthoryear{Young}{2015}]{young2015evolution}
Young, H.~P.
\newblock 2015.
\newblock The evolution of social norms.
\newblock {\em Annual Review of Economics} 7(1):359--387.

\end{thebibliography}

\end{document}